\documentclass{article}
\usepackage{geometry,setspace,times,ifthen}
\geometry{letterpaper, lmargin=1in, rmargin=1in, 
  tmargin=1.03in, bmargin=1in,nofoot}
\usepackage{amsmath,amsfonts,amssymb,paralist,latexsym,mathtools,xparse,natbib, epsfig,color,rotating,times,float,subcaption,algorithm,verbatim,enumitem,bm,graphicx,listings,url,multirow}
\usepackage{xifthen,ifthen}%
\usepackage[title]{appendix}
\usepackage{tikz} \usetikzlibrary{patterns,hobby}
\usepackage{pgfplots}

\newcommand{\state}{x}
\newcommand{\bstate}{\bar{\state}}
\newcommand{\statedim}{X}
\newcommand{\action}{a}
\newcommand{\baction}{\bar{\action}}
\newcommand{\tp}{P} \newcommand{\valueb}{V}
\newcommand{\gs}{\geq_{s}} \newcommand{\ls}{\leq_{s}}
\newcommand{\statespace}{\mathcal{X}}
\newcommand{\actionspace}{\mathcal{A}} \newcommand{\actiondim}{A}
\newcommand{\horizon}{N} \newcommand{\discount}{\rho}
\newcommand{\cost}{c} \newcommand{\reward}{r}
\newcommand{\policy}{\mu} \newcommand{\optpolicy}{\policy^*}

\newcommand{\Q}{Q}
\newcommand{\argmax}{\operatornamewithlimits{arg\,max}}
\newcommand{\argmin}{\operatornamewithlimits{arg\,min}}
\newcommand{\E} {\Bbb{E}}

\makeatletter
\newcommand{\al}{
  \alpha
  \@ifnextchar\bgroup{\sb}{}
}
\newcommand{\be}{
  \beta
  \@ifnextchar\bgroup{\sb}{}
}
\newcommand{\ga}{
  \gamma
  \@ifnextchar\bgroup{\sb}{}
}
\makeatother

\newtheorem{theorem}            {Theorem}
\newtheorem{lemma}            {Lemma}

\newtheorem{corollary}            {Corollary}
\newtheorem{proposition}            {Proposition}

\newcounter{assum_index}
\newcounter{ex1_index}

\newcommand{\ID}{\mathcal{I}}
\newenvironment{proof}{\noindent {\bf Proof:}}{\hfill$\square$}
\newcommand{\fun}{\phi}
\newcommand{\reals}{\mathbb{R}}
\newcommand{\prob}{\mathbb{P}}

\newcommand{\ep}{\epsilon}

\newcommand{\valuef}{V}
% \newcommand{\valuee}{V^\ep}

 % \newcommand{\terminal}{\reward_{\horizon}}
 %\newcommand{\terminal}{\tau}
 
% \NewDocumentCommand{\terminalc}{e{_}ge{_}}{%
%   {\cost_\horizon}%
% % \IfValueT{#3}{_{#3}}%
%   \IfValueT{#2}{(#2)}%
% }
\NewDocumentCommand{\terminal}{e{_}ge{_}}{%
   {\tau}%
 % \IfValueT{#3}{_{#3}}%
   \IfValueT{#2}{_{#2}}%
 }
 
\NewDocumentCommand{\terminalc}{e{_}ge{_}}{%
   {\tau}%
 % \IfValueT{#3}{_{#3}}%
   \IfValueT{#2}{_{#2}}%
 }

\NewDocumentCommand{\da}{e{_}ge{_}}{%
   {d}%
 % \IfValueT{#3}{_{#3}}%
   \IfValueT{#2}{_{#2}}%
 }

\newcommand{\p}{\prime}

\newcommand{\W}{W}

\newcommand{\fv}{f}
\newcommand{\bcost}{\bar{\cost}}

\newcommand{\pair}{\state,\bstate,\action}

\def\g1{\geq_{r_1}}

\newcommand{\param}[1]{p_{#1}}

\newcommand{\Qw}{\bar{\Q}}

\newcommand{\ain}{\in}
\newcommand{\pp}{p}
\newcommand{\mup}{\Delta}

\newcommand{\tpe}{\tp^\saram}
\newcommand{\saram}{\epsilon}
\newcommand{\fc}{f}
\newcommand{\dparam}{\gamma_\action}

\newcommand{\cactionspace}{\bar{\mathcal{A}}}

\newcommand{\tID}{$\ID$ }
\newcommand{\values}{\valuef^\saram}

\newcommand{\diffcost}{\Delta}
\newcommand{\increasing}{\uparrow}
\newcommand{\bgam}{\bar{\gamma}}

\newcommand{\gc}{>_c}

\newcommand{\gtwo}{>_2}
\newcommand{\nn}{\nonumber}
\newcommand{\tht}{\theta}
\allowdisplaybreaks
\title{Interval Dominance based  Structural Results for Markov Decision Process} 
\author{Vikram Krishnamurthy,  {\tt vikramk@cornell.edu} \\ School of Electrical and Computer Engineering, \\ Cornell University, Ithaca, NY}    
\begin{document}

\maketitle

\begin{abstract}
   Structural results impose sufficient conditions on the model parameters of a  Markov decision process (MDP) so  that the optimal policy is an increasing function of the underlying state.
  The  classical  assumptions  for  MDP  structural results require supermodularity of the rewards and transition  probabilities.
However, supermodularity does  not hold in many applications.
This paper uses a sufficient condition for interval dominance (called \tID) proposed in the micro-economics literature,   to obtain structural results for MDPs under more general conditions. We   present  several MDP examples where  supermodularity does not hold, yet \tID  holds, and so  the optimal policy is monotone; these include sigmoidal rewards (arising in prospect theory for human decision making), bi-diagonal and  
perturbed bi-diagonal transition   matrices (in optimal allocation problems).
We also consider MDPs with TP3 transition matrices and concave value functions.
Finally, reinforcement learning  algorithms that exploit the differential sparse structure of the optimal monotone policy are discussed. 
\end{abstract}

{\bf KEYWORDS}. MDP, Interval Dominance, monotone policy, supermodularity, differentially sparse policies, reinforcement learning

\section{Introduction}
Markov decision processes (MDPs) are controlled Markov chains. Brute force numerical solution to compute the optimal policy  of an MDP  with a large state and action space is expensive and yields little insight into the structure of the controller.
{\em Structural results} for MDPs are widely studied in stochastic control, operations research and economics~\citep{Top98,Ami05,Put94,HS84}. They  impose  sufficient conditions on the parameters of an MDP  model
so that there exists an optimal policy $\optpolicy(\state)$ that is increasing\footnote{We use increasing in the weak sense to mean non-decreasing.}  in the state $\state$, denoted as $\optpolicy(\state) \uparrow \state$.
Such monotone  optimal policies are useful as they  yield insight into the structure of the optimal controller of the MDP.
Put simply, they provide a mathematical justification for rule of thumb heuristics such as choose a ``larger'' control action for a ``larger'' state.  
Also, since monotone optimal policies are differentially sparse (see Sec.\ref{sec:conclusions}),
optimization algorithms and reinforcement learning algorithms that exploit this sparsity can  solve the MDP
efficiently~\citep{Kri16,MRK17}.

The  textbook proof \citep{Put94,HS84} for the existence of a  monotone policy in  a MDP relies on  supermodularity. 
By  imposing sufficient conditions  on
the rewards and transition probabilities of the MDP,  the classical proof shows that  the $\Q$ function  in Bellman's dynamic programming equation is supermodular.
(These conditions are reviewed  in Section \ref{sec:background}.) 
With $\statespace, \actionspace$ denoting a finite state space and action space, recall \citet{Top98}  that a generic  function $\fun: \statespace \times \actionspace \rightarrow \reals $ is supermodular\footnote{More generally supermodularity  applies to lattices with a partial order \citep{Top98}. In our  simple setup of~\eqref{eq:supermod}, \citet{Put94} uses the terminology `superadditive' instead of supermodular.} if it has increasing differences:
\begin{equation}
  \label{eq:supermod}
   \fun(\bstate,\baction) - \fun(\bstate,\action)  \geq 
   \fun(\state,\baction) - \fun(\state,\action) ,
   \quad\bstate > \state,\;\baction > \action.
\end{equation}
Topkis' theorem then states that supermodularity is a sufficient condition for
\begin{equation}
  \action^*(\state) \ain \argmax_{\action \in \actionspace} \fun(\state,\action) \uparrow \state. \label{eq:argmax}
\end{equation}
So if it can be shown for an MDP that its $Q$ function is supermodular, then Topkis theorem  implies that there exists an optimal policy that  is monotone:
$$ \policy^{*}(\state)\ain \argmax_{\action \in \actionspace} \Q(\state,\action)
\uparrow \state $$

However, supermodularity can be a restrictive  sufficient condition for the existence of a monotone optimal policy; it imposes conditions on the
rewards and transition probabilities that may  not hold in many cases.

More recently,   \citet{QS09} introduced the  \textit{Interval Dominance} condition    which is necessary and sufficient for \eqref{eq:argmax} to hold.  For the purposes of our  paper, ~\citet[Proposition~3]{QS09} gives the following  useful sufficient condition\footnote{If $\al{\pair}$ is a fixed constant independent of $\pair$, then \eqref{eq:id} is  sufficient for the single crossing property~\citep{MS94}, namely, RHS of~\eqref{eq:supermod} $\geq0$ implies
  LHS of~\eqref{eq:supermod} $\geq0$. Supermodularity implies single crossing
  which in turn implies interval dominance; see also~\citet{Ami05} for a tutorial exposition. The condition~\eqref{eq:id} is sufficient for interval dominance and is the main condition that we will use in this paper.}
  for
  $\fun: \statespace \times \actionspace \rightarrow \reals$ to satisfy  interval dominance:
\begin{equation}
  \label{eq:id}
 \fun(\bstate,\action+1) - \fun(\bstate,\action)  \geq \al{\pair} \big[
  \fun(\state,\action+1) - \fun(\state,\action) ], \quad \bstate > \state
\end{equation}
where the  scalar valued function $\al{\pair} > 0$ (strictly non-negative) is increasing in $\action$.
We symbolically  denote~\eqref{eq:id} as  the  condition
%\begin{equation}
  %\label{eq:1}
  $(\fun,\al) \in \ID$.
%\end{equation}
Comparing 
supermodularity \eqref{eq:supermod} with 
 \tID  in~\eqref{eq:id}, we see that supermodularity is a special case of~\tID  when $\al{\pair} = 1$.
An important property  of  \tID is that it compares adjacent actions $a$ and $a+1$. A
more restrictive 
 condition would be to replace $a+1$ with any action $\bar{\action} > \action$  in~\eqref{eq:id}. However, this stronger condition (which in analogy to~\eqref{eq:supermod} can  be called $\alpha$-supermodularity) is highly restrictive and does not hold for MDP examples considered below.
\\

{\bf Main Results}.
This paper   shows how \tID in~\eqref{eq:id} applies to obtain structural results for MDPs under more general conditions than the textbook supermodularity conditions. 
%Unlike supermodularity, for \tID to be additive,
%i.e., $  (\fun_1,\al) \in \ID$ and $   (\fun_2,\be) \in \ID$ to imply
%$(\fun_1 + \fun_2 ,\ga) \in \ID$ for some $\ga > 0$,
%we need to choose $\al, \be$  in the same interval. 
%Additivity is crucial to  obtain structural results for MDPs since  the $\Q$ function is the sum of the immediate reward and expected future value.
Theorems~\ref{thm:IDmonotone} and~\ref{thm:convexdom} are our main results. To avoid technicalities we  consider finite state, finite action MDPs which are either finite horizon or discounted reward infinite horizon.
We   present  several MDP examples where the $\Q$ functions satisfies \tID but not supermodularity, and  the optimal policy is monotone. One important class
comprises MDPs with sigmoidal and concave rewards; since a sigmoidal function comprises convex and concave segments, supermodularity rarely holds. Such sigmoidal rewards arise in prospect theory (behavioral economics) based models for human decision making \citep{KT79}.
A second important class of  examples  we will consider involves perturbed bi-diagonal transition   matrices for which the standard supermodularity assumptions do not hold.
Bi-diagonal transition matrices arise in optimal allocation with penalty costs~\citep{DLR76,Ros83}.
% Indeed, \cite{Ros83} use an ingenious proof to show that the value function is convex, and therefore the $\Q$ function is submodular, thereby implying that the optimal policy is increasing in the state.
The  result in Sec.~\ref{sec:ross}  complements this  classical
result  for possibly non-submodular costs.
%we show using that 
%by using the  \tID condition,  the optimal policy is increasing in the state.
%\footnote{The \tID condition yields degenerate policies for the specific  cost structure in
%\citet{Ros83}. However,   it applies to non-supermodular  cost structures with perturbed bi-diagonal matrices. Such cases  cannot be handled by the convexity based supermodularity approach in ~\citet{Ros83}.  \label{foot:ross}}  
Finally, a third class of examples
%discussed in  Sec.\ref{subsec:concave}
comprises  MDPs with integer concave value functions.  Theorem~\ref{thm:convexdom}  and Corollary~\ref{cor:convexdom} impose  TP3 Theorem~\ref{thm:convexdom}  and Corollary~\ref{cor:convexdom} impose  TP3 (totally positive of order 3) assumptions along with \tID to show that the optimal policy is monotone. An  extension of the classical
TP3 result of~\citet[pg 23]{Kar68} is proved  to characterize  the \tID condition for MDPs with bi-diagonal and tri-diagonal transition matrices.  Such MDPs model  controlled random walks \citep{Put94} and arise in the control of queuing and manufacturing systems. 

\section{Background.  Supermodularity based   Results} \label{sec:background}
An infinite horizon discounted reward MDP model is the tuple
%\begin{equation}
$  (\statespace,\actionspace, (\tp(\action), \reward(\action),\action \in \actionspace),\discount)$.
%  \label{eq:mdpmodel}
%\end{equation}
Here $\statespace = \{1,\ldots,\statedim\}$ denotes the finite state space, and
we will denote $\state_k \in \statespace$ as the state at time $k=0,1,\ldots$.
Also
$\actionspace = \{1,\ldots,\actiondim\}$ is the action space, and we will denote $\action_k \in \actionspace$ as the action chosen at time $k$. $\tp(\action)$ are $\statedim\times \statedim$ stochastic matrices with elements $\tp_{ij}(\action) = \prob(\state_{k+1}=j|\state_k=i,\action_k=\action)$,  $\reward(\action)$ are $\statedim$ dimensional reward vectors with elements denoted $\reward(\state,\action)$, and $\discount \in (0,1)$ is the  discount factor.

The action  at each time $k$ is chosen as
$\action_k = \policy(\state_k)$ where  $\policy$ denotes a stationary  policy $\policy: \statespace\rightarrow \actionspace$. %``Solving'' an MDP involves determining the optimal policy. 
The optimal stationary policy  $\optpolicy:\statespace\rightarrow \actionspace$
is the maximizer of the infinite horizon discounted reward $J_\policy$:  
\begin{equation}
  \label{eq:mdpcost}
  \optpolicy(\state)  \ain \argmax_\policy J_\policy(\state), \quad
  J_\policy(\state) = \E_\policy \{\sum_{k=0}^\infty \discount^k \reward(\state_k,\action_k) \mid \state_0 = \state\}
\end{equation}
The optimal stationary policy  $\optpolicy$  satisfies Bellman's dynamic programming  equation
\begin{equation}
  \label{eq:dp}
  \begin{split}
    \optpolicy(\state)  \ain \argmax_{\action \in \actionspace}\{ \Q(\state,\action)\} , \quad
    \valueb(\state) = \max_{\action \in \actionspace}\{ \Q(\state,\action)\},  \quad
    \Q(\state,\action) = 
  \reward(\state,\action) +
  \discount \sum_{j=1}^\statedim \tp_{\state j}(\action) \,\valueb(j) 
\end{split}
\end{equation}
An   MDP with finite horizon $\horizon$ is the tuple
%\begin{equation}
%  \label{eq:finite_horizon}
$  (\statespace,\actionspace,(\tp(\action),\reward(\action), \action\in \actionspace), \terminal)$
%\end{equation}
where $\terminal$ is the $\statedim$-dimensional terminal reward vector. (In general $\tp(\action)$ and $\reward(\action)$ can
depend on time $k$; for notational  convenience  we suppress this time dependency.)
The optimal policy sequence $\policy_0,\ldots,\policy_{\horizon-1}$ is given by Bellman's recursion:
$\valueb_{\horizon}(\state) = \terminal{\state}$, $\state \in \statespace$, and for $ k=0,\ldots,\horizon$,
\begin{equation}
  \label{eq:finitedp}
  \begin{split}
    \optpolicy_k(\state)  \ain  \argmax_{\action \in \actionspace}\{ \Q_k(\state,\action)\} , \;
    \valueb_k(\state) = \max_{\action \in \actionspace}\{ \Q_k(\state,\action)\},  \quad
    \Q_k(\state,\action) = 
  \reward(\state,\action) +
  \sum_{j=1}^\statedim \tp_{\state j}(\action) \,\valueb_{k+1}(j) 
  \end{split}
\end{equation}

\subsubsection*{Monotone Policies  using Supermodularity}

%The key structural result for MDPs is that of monotone optimal policies. 
% The classical textbook proof for a monotone policy for a MDP imposes conditions  on
% the MDP model \eqref{eq:mdpmodel} so that $\Q(\state,\action)$ is supermodular.
% A function $\fun: \statespace \times \actionspace \rightarrow \reals $ is supermodular if 
% \begin{equation}
%   \label{eq:supermod}
%    \fun(\state+1,\action+1) - \fun(\state+1,\action)  \geq 
%   \fun(\state,\action+1) - \fun(\state,\action) .
% \end{equation}
% Topkis' classical theorem states that supermodularity is a sufficient condition for
% \begin{equation*}
%   \action^*(\state) = \argmax_\action \fun(\state,\action) \uparrow \state.
% \end{equation*}
%
%
For an MDP, the   textbook sufficient conditions  for $\Q$ in~\eqref{eq:dp} or ~\eqref{eq:finitedp} to be supermodular are  
\begin{enumerate}[label=(A{\arabic*})]
  %  \setcounter{enumi}{-1}
%\item \label{A0}
%  $\reward(\state,\action) \geq 0$.
\item \label{cost}
Rewards  $\reward(\state,\action)$  is increasing in $\state$ for each $\action$.
  
\item \label{tps}
  $\tp_\state(\action) \ls \tp_{\state+1}(\action)$ for each $\state,\action$,
  where $\tp_\state(\action)$ is the $\state$-th row of  matrix $\tp(\action)$.
 \footnote{$\ls$ denotes first order stochastic dominance, namely,
  $\sum_{j=l}^\statedim \tp_{\state,j}(\action) \leq \sum_{j=l}^\statedim \tp_{\state+1,j}(\action)$, $l\in \statespace$. \label{foot:fod}}
\item \label{supermod_cost}
  $\reward(\state,\action)$ is supermodular in $(\state,\action)$.
  %i.e.,
  %$    \reward(\state,\action+1) - \reward(\state,\action)$ is increasing in %$\state$.
\item \label{tp_supermod}
  $\sum_{j \geq l} \tp_{\state j}(\action)$ is supermodular in $\state,\action$ for each  $l \in \statespace$.
    \setcounter{assum_index}{\value{enumi}}
\item \label{terminal}
 The terminal reward  $\terminal{\state} \increasing \state$.
  \setcounter{assum_index}{\value{enumi}}
\end{enumerate}

%Assumption \ref{A0} is  notational convenience and without loss of generality.
 %If the rewards are negative, simply add a positive constant to all rewards so that they are non-negative, this does not affect the optimal policy
% $\policy^*$. For brevity, we will not refer explicitly to \ref{A0},  %but assume it holds throughout the paper.
%Assumptions \ref{cost}-\ref{terminal} are classical; see \cite{Put94,HS84} for discussion. 

The  following textbook result establishes $Q_k$ and $Q$ are supermodular;  so the optimal policy is monotone:
\begin{proposition}[\citep{Put94,HS84}]
\label{res:textbook}
  (i) For  a discounted reward  MDP, under \ref{cost}-\ref{tp_supermod}, the optimal policy $\optpolicy(\state) $ satisfying \eqref{eq:dp}
  is increasing\footnote{More precisely, there exists a version of the optimal policy that is non-decreasing in $\state$. Recall that \eqref{eq:mdpcost} uses the notation  $\ain$ since the optimal policy is not necessarily unique.} in $\state$.\\
(ii)   For a finite horizon MDP,  under \ref{cost}-\ref{terminal}, the
  optimal policy sequence $\optpolicy_k(\state)$, $k=0,\ldots,\horizon-1$,   satisfying~\eqref{eq:finitedp} is increasing   in $\state$.
\end{proposition}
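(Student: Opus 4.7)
The plan is to reduce the result to Topkis's theorem by showing that $Q_k$ (finite horizon) and $Q$ (infinite horizon) are supermodular in $(\state,\action)$. Once supermodularity of the $\Q$ function is established, \eqref{eq:argmax} immediately gives that the optimal policy is increasing in $\state$. For the finite horizon case I would proceed by backward induction on $k$, maintaining simultaneously the two invariants: (i) $\valueb_k(\state)$ is increasing in $\state$, and (ii) $\Q_k(\state,\action)$ is supermodular in $(\state,\action)$. The base case $k=\horizon$ uses \ref{terminal} to give $\valueb_\horizon(\state)=\terminal{\state}\uparrow\state$.

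The inductive step has two sub-steps. First, assuming $\valueb_{k+1}$ is increasing, I would argue that $\Q_k(\state,\action)$ is increasing in $\state$ for each fixed $\action$: the reward term is handled by \ref{cost}, and the expectation $\sum_j \tp_{\state j}(\action)\valueb_{k+1}(j)$ is increasing in $\state$ by \ref{tps}, since first-order stochastic dominance preserves monotone integrals. Because a pointwise maximum of functions increasing in $\state$ is increasing, this yields $\valueb_k(\state)\uparrow\state$, which propagates invariant (i) to the next stage.

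The main technical step is establishing supermodularity of $\Q_k$. The reward contribution is supermodular by \ref{supermod_cost}. For the transition contribution, the natural device is an Abel (summation-by-parts) rewriting
\begin{equation*}
  \sum_{j=1}^\statedim \tp_{\state j}(\action)\,\valueb_{k+1}(j)
  = \valueb_{k+1}(1) + \sum_{l=2}^\statedim \big[\valueb_{k+1}(l)-\valueb_{k+1}(l-1)\big]\,\sum_{j\geq l}\tp_{\state j}(\action),
\end{equation*}
which expresses the expectation as a non-negative combination (using invariant (i), so each difference $\valueb_{k+1}(l)-\valueb_{k+1}(l-1)\geq 0$) of the quantities $\sum_{j\geq l}\tp_{\state j}(\action)$. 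By \ref{tp_supermod} each of these is supermodular in $(\state,\action)$, and non-negative combinations of supermodular functions are supermodular, giving invariant (ii). Topkis's theorem then yields $\optpolicy_k(\state)\uparrow\state$ for every $k$, proving part (ii).

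For part (i), I would lift the argument to the infinite horizon case via value iteration. Starting from $\valueb^{(0)}\equiv 0$ (trivially increasing), apply the Bellman operator repeatedly; the induction above shows that the set of increasing functions and the supermodularity of the associated $\Q$-iterates are preserved at each step. Uniform convergence of value iteration to the unique fixed point $\valueb$ (standard contraction argument, using $\discount\in(0,1)$) preserves monotonicity in $\state$ as a pointwise-closed property, and thus the limiting $\Q(\state,\action)$ is supermodular. A final invocation of Topkis's theorem gives $\optpolicy(\state)\uparrow\state$. The main obstacle I anticipate is bookkeeping the two invariants together: neither monotonicity of $\valueb$ nor supermodularity of $\Q$ can be carried through in isolation, since supermodularity of $\Q_k$ relies on monotonicity of $\valueb_{k+1}$ through the Abel rewriting above.
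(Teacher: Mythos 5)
Your proposal is correct and follows essentially the same route as the paper's (textbook) argument: backward induction to show $\valueb_k$ is increasing via \ref{cost}, \ref{tps}, \ref{terminal}, then supermodularity of $\Q_k$ from \ref{supermod_cost}, \ref{tp_supermod} (your summation-by-parts identity is the standard way to fill in that step), Topkis's theorem, and value iteration for the infinite-horizon case. No gaps.
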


\begin{comment}
For the finite horizon case, the textbook proof shows by induction on the dynamic programming recursion~\eqref{eq:finitedp},  that \ref{cost}, \ref{tps}, \ref{terminal}
are sufficient for $\Q_k(\state,\action)$ and therefore $\valuef_k(\state)$ to increase in $\state$.
Then \ref{supermod_cost}, \ref{tp_supermod}  imply   $\Q_k(\state,\action)$ in \eqref{eq:dp}  is  supermodular. So Topkis' 
theorem  implies $\optpolicy_k(\state) \uparrow \state$. The proof for the infinite horizon discounted case is similar, with~\eqref{eq:finitedp} constituting the  value iteration algorithm that converges to~\eqref{eq:dp}.

\subsection{Single Crossing, Interval Dominance}

\subsection{Example of Interval Dominance}
Here are two examples where $\reward(\state,\action) \uparrow \state$ and satisfies the ID condition~\eqref{eq:id}:
$$\text{Ex1. }  \reward(\state,\action) = \begin{bmatrix}
  10 & 8 & 6 & 4 & 2 \\
  {\bf 12} & 15 & 14 & {\bf 13} & 11 \\
  {\bf 15} & 16 & 17 & {\bf 14} & 12 
\end{bmatrix}  \quad\text{ Ex2.  }
\reward(\state,\action) = \begin{bmatrix}
  10 & 6 & 4 \\ {\bf 12} & 15 & {\bf 13} \\ {\bf 15} & 16 & {\bf 14} \\ 16 & 18 & 19 \\ 18 & 19 & 20
\end{bmatrix}
$$
Note single crossing does not hold for the circled elements.
\end{comment}

\section{MDP Structural Results using Interval Dominance} \label{sec:mdp}

The supermodular conditions \ref{supermod_cost},  \ref{tp_supermod} on the rewards and transition probabilities, are restrictive. We 
relax  these with the  interval dominance condition \tID defined in~\eqref{eq:id}
as follows:

%\subsection{Main Result}

%We relax the textbook supermodular conditions \ref{supermod_cost} and
%ref{tp_supermod} to the following:
\begin{enumerate}[label=(A{\arabic*})]
  \setcounter{enumi}{\value{assum_index}}
\item \label{cost_id} For $\be{\pair} > 0$ and increasing in $\action$, the rewards satisfy
\begin{equation}
\label{eq:2}
\reward(\bstate,\action+1) - \reward(\bstate,\action) \geq
\be{\pair} \, \big[\reward(\state,\action+1) - \reward(\state,\action) \big], \quad \bstate > \state
\end{equation}

\item \label{tp_id} For each $l\in \statespace$, the transition
  probabilities satisfy
  %
  %$\big(\sum_{j\geq l} \tp_{\state,j}(\action),\alpha\big) \in
  %\ID$. That is, the transition probabilities satisfy the interval
  %dominance condition
  \begin{equation}
    \label{eq:IDTP}
    \sum_{j\geq l} (\tp_{\bstate,j}(\action+1) -  \tp_{\bstate,j}(\action))
    \geq \al{\pair} \biggl[ \sum_{j\geq l} (\tp_{\state,j}(\action+1) -  \tp_{\state,j}(\action))\biggr], \quad \bstate > \state
  \end{equation}
  where $\al{\pair} > 0$ and is increasing
  in $\action$. ($\al{\pair}$ is not allowed to depend on $l$.)
  \\
  Equivalently, \eqref{eq:IDTP} can be expressed in terms of first
  order stochastic dominance $\gs$ as
  \begin{equation}
    \label{eq:id2}
    \frac{ \tp_{\bstate}(\action+1) + \al{\pair}\,\tp_\state(\action)}
    {1+\al{\pair}}
    \gs    \frac{ \tp_{\bstate}(\action) + \al{\pair}\,\tp_\state(\action+1)}
    {1+\al{\pair}}, \quad \bstate > \state
  \end{equation}

\item \label{sum}
There exist $\al{\pair} = \be{\pair}$ for which~\ref{cost_id}, \ref{tp_id}
  hold.
 %  One of the following two conditions hold for $\al,\be$ in~\ref{cost_id}, \ref{tp_id}:\\
% EITHER
%   there exist positive constants  $\al{\pair}^{*}\leq 1$ and
%   $\be{\pair}^{*}\leq 1$ (these constants can depend
%   on $\pair$) such that ~\ref{cost_id}, \ref{tp_id}
%   hold for all 
% \begin{equation} 
%  % \begin{split}
%    \al{\pair}  \in (0,\al{\pair}^*] \;\text{ and }  \; \be{\pair} \in (0,\be{\pair}^*],  % \\
%  % \text{ where }   \al{\pair}^{*}& \leq 1, \be{\pair}^*\leq 1 
%  % \end{split} 
%  \end{equation}
%  OR there exist positive constants  $\al{\pair}^{*}\geq 1$
% and  $\be{\pair}^{*}\geq 1$
%   such that~\ref{cost_id}, \ref{tp_id}
%   hold for all 
%  \begin{equation} 
% %  \begin{split}
%     \al{\pair} \in [\al{\pair}^{*},\infty), \; \be{\pair} \in [\be{\pair}^{*},1] % \\
%     %   \text{ where } \al{\pair}^{*} & \geq 1, \be{\pair}^{*}\geq 1
%     % \end{split}
%   \end{equation}
  % In either case,  $\al^*$ and $\beta^*$ can depend on $\pair$.
  \setcounter{assum_index}{\value{enumi}}
\end{enumerate}

{\em Remark.}  
If $\al{\action}=1$ and $\be{\action}=1$, then
 \ref{cost_id} and \ref{tp_id} are equivalent
  to the supermodularity conditions \ref{supermod_cost} and
  \ref{tp_supermod}. Then ~\ref{sum} holds trivially. Note that 
 \ref{sum} is sufficient for the sum of two \tID functions to be \tID. 
% \item For future reference, note that if the actions
%   $\action \in [0,A]$ were continuous valued, then a sufficient
%   condition for \tID in analogy to \ref{tp_id} is (see~\cite{QS09} for details):
%   \begin{equation} \label{eq:ID_cont} \sum_{j\geq l} \frac{d
%       \tp_{\bstate,j}(\action)}{d\action} \geq \al{\pair}
%     \sum_{j\geq l} \frac{d \tp_{\state,j}(\action)}{d\action}
%   \end{equation}
% where $\al{\pair}> 0$ is increasing in $\action$.

%\item  For the case of minimzing a cumulative cost rather than maximizing reward, Theorem~\ref{thm:IDmonotone} 
% holds with~\ref{cost}, \ref{cost_id}, \ref{sum} expressed in
%terms of cost   $\cost(\state,\action) = -\reward(\state,\action)$. 

% \item Naively one might think that the rewards satsify ID for all $\al \in (0,1]$, then we can simply ignore the value function term by scaling it by small costs $\beta $ and show ID. But this is not true since $\sum_j (\tp_{ij}(\action+1) - \tp_{ij}(\action))\value_k(j)$ is not necessarily positive. 

 {\bf Main Result}.
The following is our main result.

\begin{theorem} \label{thm:IDmonotone} (i) For a discounted reward
  MDP, under \ref{cost}, \ref{tps}, \ref{cost_id},
\ref{tp_id}, \ref{sum}, there exists an  optimal stationary  policy $\optpolicy(\state)$ satisfying
  \eqref{eq:dp} which is increasing in $\state$.  (ii) For a finite horizon
  MDP, under \ref{cost}, \ref{tps}, \ref{terminal},
  \ref{cost_id}, \ref{tp_id}, \ref{sum}, there exists an  optimal policy
  sequence $\optpolicy_k(\state)$, $k=0,\ldots,\horizon$
  satisfying~\eqref{eq:finitedp} which is increasing in~$\state$.
\end{theorem}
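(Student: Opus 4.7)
The plan is to follow the strategy of Proposition~\ref{res:textbook}: induct on Bellman's recursion \eqref{eq:finitedp} to establish simultaneously that (a) $\valuef_k(\state) \uparrow \state$ and (b) $\Q_k$ satisfies the interval dominance condition~\eqref{eq:id}, then invoke the Quah--Strulovici sufficient condition to conclude $\optpolicy_k(\state) \uparrow \state$. The infinite-horizon case~(i) then follows by value iteration and passage to the pointwise limit, since the multiplier supplied by \ref{sum} can be chosen independent of the iteration index.

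For the base case at $k=\horizon$, $\valuef_\horizon(\state) = \terminal{\state}$ is increasing by \ref{terminal}. Assuming $\valuef_{k+1}(\state) \uparrow \state$, assumptions \ref{cost} and \ref{tps} yield $\Q_k(\state,\action) \uparrow \state$ for each $\action$, whence $\valuef_k(\state) \uparrow \state$. The main work is verifying (b). By \ref{sum}, pick a common multiplier $\gamma_{\state,\bstate,\action} := \al{\pair} = \be{\pair}$ and decompose
\[
  \Q_k(\bstate,\action+1) - \Q_k(\bstate,\action) - \gamma_{\state,\bstate,\action}\bigl[\Q_k(\state,\action+1) - \Q_k(\state,\action)\bigr]
\]
into a reward contribution (nonnegative by~\ref{cost_id}) plus a transition contribution. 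Writing $\Delta_{yj} := \tp_{yj}(\action+1) - \tp_{yj}(\action)$, the transition contribution equals $\sum_j \Delta_{\bstate j}\valuef_{k+1}(j) - \gamma_{\state,\bstate,\action}\sum_j \Delta_{\state j}\valuef_{k+1}(j)$, possibly scaled by $\discount$ in the discounted case (which does not affect signs).

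Since $\sum_j \Delta_{yj}=0$, summation by parts gives
\[
  \sum_j \Delta_{yj}\,\valuef_{k+1}(j) \;=\; \sum_{l\geq 2}\bigl(\valuef_{k+1}(l)-\valuef_{k+1}(l-1)\bigr)\sum_{j\geq l}\Delta_{yj}.
\]
Assumption \ref{tp_id}, applied term-by-term with $\al{\pair}=\gamma_{\state,\bstate,\action}$, together with $\valuef_{k+1}(l)-\valuef_{k+1}(l-1)\geq 0$, shows the transition contribution is nonnegative. Thus $\Q_k$ satisfies \eqref{eq:id} with the strictly positive, $\action$-increasing multiplier $\gamma_{\state,\bstate,\action}$, and \citep[Prop.~3]{QS09} then yields a monotone selector $\optpolicy_k(\state)\in\argmax_\action \Q_k(\state,\action)$.

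The conceptually hardest point --- and the reason for assumption \ref{sum} --- is that interval dominance is not closed under sums when the multipliers differ: $\reward$ being ID with $\be{\pair}$ and $\tp \valuef$ being ID with $\al{\pair} \neq \be{\pair}$ leaves no natural candidate multiplier for $\Q_k = \reward + \discount \tp \valuef$. Condition \ref{sum} aligns the two multipliers so the inequalities add directly. For case (i), value iteration starting from $\valuef^{(0)}\equiv 0$ preserves both the monotonicity of $\valuef^{(n)}$ and the ID property of $\Q^{(n)}$ with the same $\gamma_{\state,\bstate,\action}$; sup-norm contraction delivers convergence to the fixed point of~\eqref{eq:dp}, and the limiting $\Q$ inherits \eqref{eq:id} pointwise, so a monotone optimal stationary policy exists.
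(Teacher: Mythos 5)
Your proposal is correct and follows essentially the same route as the paper: induction on the Bellman recursion using \ref{cost}, \ref{tps}, \ref{terminal} to get $\valueb_k \uparrow \state$, then establishing \tID for $\Q_k$ by combining \ref{cost_id} with \ref{tp_id} under the common multiplier supplied by \ref{sum}, and handling the discounted case via value iteration. Your explicit summation-by-parts step is simply the standard proof of the implication the paper invokes directly, namely that the stochastic-dominance form \eqref{eq:id2} together with $\valueb$ increasing yields \eqref{eq:step1}.
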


\textit{Remark.} Theorem~\ref{thm:IDmonotone} also holds for average reward MDPs that are unichain \cite{Put94} so that  a stationary optimal policy  exists. This is because our proof uses the  value iteration algorithm,  and for average reward   problems, the same ideas directly apply to the relative  value iteration algorithm.

\begin{proof}
  The standard textbook proof \citep{Put94} shows via induction that
for the finite horizon case, 
  \ref{cost},
  \ref{tps}, \ref{terminal}  imply that $\Q_{k}(\state,\action)$ is increasing in $\state$
  for each $\action \in \actionspace$, and therefore $\valueb_{k}(\state)$
  is increasing in $\state$.
  The induction step also constitutes the value iteration algorithm for  the infinite horizon case, and shows that $Q(\state, \action)$ and $\valueb(\state)$ are increasing in $\state$.  
  
  Next, since $\valueb(\state)$ is increasing, 
 assumption \eqref{eq:id2} in
  \ref{tp_id} implies that for  $\bstate> \state$,
  \begin{equation}
    \sum_{j=1}^\statedim \big[\tp_{\bstate,j}(\action+1) - \tp_{\bstate,j}(\action)
    \big]\valueb(j ) \geq \al{\pair} \big(
    \sum_{j=1}^\statedim \big[\tp_{\state,j}(\action+1) - \tp_{\state,j}(\action)
    \big]\valueb(j ) \big) \label{eq:step1}
  \end{equation}

    Assumption \ref{cost_id} implies the rewards satisfy \tID.
Finally, 
\ref{sum} implies 
for  $ \bstate> \state$,
\begin{multline} \label{eq:sumstep}
  \reward(\bstate,\action+1) - \reward(\bstate,\action)
   +  \sum_{j=1}^\statedim \big[\tp_{\bstate,j}(\action+1) - \tp_{\bstate,j}(\action)
    \big]\valueb(j )
  \\  \geq \ga{\pair}
 \bigg(  \reward(\state,\action+1) - \reward(\state,\action)
    \sum_{j=1}^\statedim \big[\tp_{\state,j}(\action+1) - \tp_{\state,j}(\action)
    \big]\valueb(j ) \bigg)
  \end{multline}
  for $\gamma=\al = \be$.
  Thus
  $(\Q,\gamma) \in \ID$ implying that \eqref{eq:argmax} holds.
\end{proof}

\subsection{Example 1. MDPs with Interval Dominant Rewards}
\label{sec:ex1}

Our first example considers MDPs with  sigmoidal and concave\footnote{Throughout this paper convex (concave)  means integer convexity (concavity). Since   $x \in \{1,\ldots,\statedim\}$, integer  convex $\fun$  means $\fun(x+1) - \fun(x) \geq \fun(x) - \fun(x-1)$. We do not consider higher dimensional discrete convexity such as multimodularity; see Sec.\ref{sec:conclusions}.} rewards.   Supermodularity is difficult to ensure  since a sigmoidal reward  comprises a convex segment  followed by a concave segment.
In Figure~\ref{fig:ex1a}, reward $\reward(x,1)$ is  sigmoidal, while  $\reward(x,2)$ and $ \reward(x,3)$ are concave in $\state$.
Since concave reward $\reward(x,3)$ intersects sigmoidal reward  $\reward(x,1)$ multiple times, the  single crossing condition and therefore supermodularity ~\ref{supermod_cost} does not hold.  Also $r(x,3) -r(x,1)$ is not increasing and so  not supermodular.  But  condition \tID \ref{cost_id} holds. Specifically, $r(x,2) - r(x,1)$ is single crossing, and $r(x,3) -  r(x,2)$ is  single crossing.
Note that \tID does not require  $r(x,3) - r(x,1)$ to be single crossing.

\begin{figure}[h] \centering
  \begin{subfigure}{0.45\linewidth}
    \includegraphics[scale=0.45]{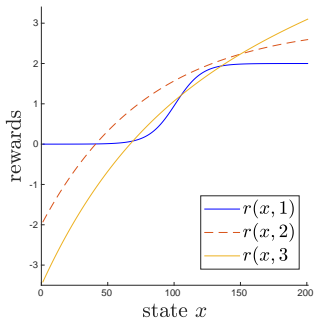}
   \caption{Rewards} \label{fig:ex1a}
 \end{subfigure}
 \begin{subfigure}{0.45\linewidth}
    \includegraphics[scale=0.4]{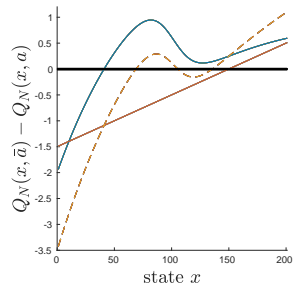}
    \caption{$Q$-function for MDP}
    \label{exi:sigmoidreward}
  \end{subfigure}
    \caption{Interval Dominant Rewards that are not  single crossing and so  not supermodular.  If supermodularity holds then  the curves would be increasing with $\state$. Yet  \tID holds by Corollary~\ref{cor:ex0} and the optimal policy is monotone; see Example (i).}
  \label{fig:ex1rewards}
\end{figure}

    Consider a discounted reward MDP. Assume:
    \begin{enumerate}[label=(Ex1.{\arabic*})]
    \setcounter{enumi}{\value{ex1_index}}
 
\item \label{c1:r3}
For each pair of actions $\action,\action+1$, assume there is state $x_a^*$ such that
$\reward(x,a+1) \leq  \reward(x,a)$,   $\tp_x(a+1) \ls \tp_x(a)$ for $x \leq x_a^*$. Also  $\reward(x,a+1) \geq \reward(x,a)$, $\tp_{x}(a+1) \gs \tp_x(a)$   for
$x \geq  x_a^*$.
\end{enumerate}

\begin{corollary} \label{cor:ex0}
  Consider a discounted reward MDP.
   Assume
  \ref{cost}, \ref{tps}, \ref{c1:r3}. 
  Then Theorem~\ref{thm:IDmonotone} holds.\end{corollary}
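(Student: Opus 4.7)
The plan is to verify that assumption \ref{c1:r3} implies the three interval-dominance conditions \ref{cost_id}, \ref{tp_id}, \ref{sum} needed to invoke Theorem~\ref{thm:IDmonotone}. The pivotal observation is that \ref{c1:r3} supplies a single threshold $x_a^*$ that controls both the sign of the reward difference $\reward(\cdot,\action+1)-\reward(\cdot,\action)$ and the direction of the first-order stochastic dominance between $\tp_{\cdot}(\action+1)$ and $\tp_{\cdot}(\action)$.

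First I would recall, as at the start of the proof of Theorem~\ref{thm:IDmonotone}, that \ref{cost} together with \ref{tps} yields by induction that $\valueb_k$ (or $\valueb$) is increasing in $\state$, so for each $l$ the sign of $\sum_{j\geq l}[\tp_{\state,j}(\action+1)-\tp_{\state,j}(\action)]$ matches the direction of the stochastic dominance declared in \ref{c1:r3}. Next, for $\bstate > \state$ and a fixed action $\action$, I would split into three regimes based on where $\state$ and $\bstate$ sit relative to $x_a^*$: a low regime $\bstate \leq x_a^*$ in which both reward differences and all transition integrals are $\leq 0$; a high regime $\state \geq x_a^*$ in which both are $\geq 0$; and a straddle regime $\state < x_a^* \leq \bstate$ in which each \tID inequality has LHS $\geq 0$ and RHS $\leq 0$, so it holds for any positive multiplier.

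Because the sign patterns of the reward and transition inequalities match across these regimes, one can take a common scalar $\al{\pair}=\be{\pair}>0$: in the straddle regime any positive value works; in the low regime take the maximum over $l$ of the ratios $\mathrm{LHS}/\mathrm{RHS}$ (both negative, so the ratio is positive); in the high regime take the minimum of the corresponding positive ratios. Finiteness of $\statespace$ and $\actionspace$ guarantees each extreme is finite and strictly positive. This delivers \ref{cost_id}, \ref{tp_id}, and \ref{sum} simultaneously with $\gamma=\alpha=\beta$, and Theorem~\ref{thm:IDmonotone} then yields the monotone optimal policy in both the discounted and finite-horizon cases.

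The main obstacle is the stipulation that $\al{\pair}$ be increasing in $\action$. For a fixed pair $(\state, \bstate)$, the regime (and hence the nature of the bound on $\alpha$) can change with $\action$ as $x_a^*$ moves, so one must check that the upper bounds coming from the high regime at a larger $\action$ remain compatible with the lower bounds coming from the low regime at a smaller $\action$. I expect this reconciliation to be the delicate step. A clean alternative, should the monotone-in-$\action$ bookkeeping become awkward, is to observe directly that $\Q(\state,\action+1)-\Q(\state,\action)$ has single-crossing differences in $\state$ with threshold $x_a^*$ — an immediate consequence of \ref{c1:r3} once $\valueb$ is increasing — and then invoke monotone comparative statics to conclude $\optpolicy(\state)\uparrow\state$.
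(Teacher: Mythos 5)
Your argument is essentially the paper's own proof: the same three-regime split around the threshold $x_a^*$, with $\al{}=\be{}$ taken as a maximum of (positive) ratios in the low regime, a minimum in the high regime, and arbitrary in the straddle regime, finiteness of $\statespace$ and $\actionspace$ guaranteeing these extremes exist. The monotone-in-$\action$ reconciliation you flag as the delicate step is a genuine issue, but the paper's proof does not address it either --- it simply asserts that regime-wise constants chosen independently of $\action$ suffice --- so your write-up is, if anything, more candid on the one point where the published argument is thinnest.
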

 
Compared to textbook Proposition~\ref{res:textbook},  Statement 1 of
Corollary \ref{cor:ex0} does not impose supermodularity conditions on the rewards
or transition probabilities.  
\ref{c1:r3}  is  weaker than the single crossing condition.

\begin{proof}
We verify that the conditions \ref{cost_id}, \ref{tp_id},  ~\ref{sum} of Theorem~\ref{thm:IDmonotone} hold:

%Since $x_a^* \uparrow x$, for $x < \bstate$,  we consider three cases:   $x \leq x^*_a, \bstate \leq  \bstate_a^*$
%or   $x \geq x^*_a, \bstate \geq  \bstate_a^*$ or $x \leq x^*_a, \bstate \geq  \bstate_a^*$.

First consider  $x < \bstate \leq x^*_a$. Since $\reward(x,a) \geq \reward(x,a+1)$, and  $\reward(\bstate,a) \geq \reward(\bstate,a+1)$,
\ref{cost_id} holds for
all  $\be \in [\be^*_{\pair},\infty)$ for some $\be^*_{\pair} > 0$.
Also $\tp_{x}(\action+1) \ls \tp_x(a)$
implies~\ref{tp_id} holds for all $\al{\pair} \in [\al^*_{\pair},\infty)$ for some $\al^*_{\pair} > 0$.
So  we can choose $\al = \be=\max_a\{\al^*_{\pair},\be^*_{\pair}\}$ independent of $a$ so that~\ref{sum} holds.

Next  consider $ \bstate > x\geq   x^*_a$.  Then \ref{cost_id} holds for
all  $\be \in (0,\be^*_{\pair}]$ for some $\be^*_{\pair} > 0$.
Also $\tp_{x}(\action+1) \gs \tp_x(a)$
implies~\ref{tp_id} holds for all $\al \in (0,\al^*_{\pair}]$ for some $\al^*_{\pair} > 0$.
Therefore, we can choose $\al = \be = \min_a\{\al^*_{\pair},\be^*_{\pair}\}$ independent of $a$ so that~\ref{sum} holds.
Finally, for $x \leq x^*_a$ and $\bstate > \state^*_a$, \ref{cost_id} and~\ref{tp_id} hold
for all $\al, \be > 0$. 
So  Theorem \ref{thm:IDmonotone} applies and  $\policy^{*}(\state) \uparrow \state$.
\end{proof}

\subsubsection*{Example (i). Sigmoidal\footnote{Sigmoidal rewards/costs are ubiquitous. They  arise in logistic regression, prospect theory in  behavioral economics, and  wireless communications.}  and Concave Rewards}  The following MDP parameters  satisfy the assumptions of Corollary~\ref{cor:ex0}:
$\statedim=201$, $\actiondim=3$.
The action dependent transition matrices are 
$$\tp_i(1) = \tp_{i-1}(1) + \mu (e_\statedim- e_1),
\quad \tp_i(a+1) = \begin{cases}  \tp_i(a) - \ep (e_\statedim-e_1), \; i \leq 50,  \\
                     \tp_i(a) +  \ep (e_\statedim-e_1), \; i > 50
                   \end{cases}                 
\mu = \frac{0.004}{\statedim},\; \ep = \frac{0.05}{\statedim} $$
Here  $e_i$ denotes the unit $\statedim$-dimension row vector with 1 in the $i$-th position.

The rewards parametrized by $\tht=[2,\statedim-1,20, 5,80,-2,5,80,-3.5,0.01]$ are
\begin{equation} 
  \begin{split}  
\reward(\state,1) &= \frac{\tht_1}{1 + \exp(\frac{\state - \tht_2}{\tht_3})}, \;
\reward(\state,2) = \tht_4 ( 1 - \exp(-\frac{\state}{\tht_5}) ) + \tht_6, \\
\reward(\state,3) &= \tht_7 ( 1 - \exp(-\frac{\state}{\tht_8}) ) + \tht_9 + \tht_{10}\,\state,
\end{split}
\label{eq:ex1rewards}
\end{equation}
 Figure~\ref{exi:sigmoidreward} shows the non-supermodular $Q_\horizon$
for $\horizon=100$, $\discount=0.9$.
$Q_N(\state,3) - Q_N(\state,1)$ (broken line)  intersects the horizontal axis three times; so single crossing does not hold.   $Q_N(x,2) - Q_N(x,1)$ (blue line) is non-monotone (non-supermodular).   Statement 1, Corollary~\ref{cor:ex0} applies; so the optimal policy is monotone.

\subsubsection*{Example (ii). Prospect Theory based rewards}
In prospect theory \citep{KT79},  an agent (human decision maker)~$a$ has utility  $\reward(x,a)$ that is  asymmetric sigmoidal in $x$. This asymmetry reflects a human decision  maker's risk seeking behavior (larger slope)  for losses and risk averse behavior (smaller slope) for gains.
 With $\statedim$  an  even integer, the prospect theory rewards are
\begin{equation}
  \label{eq:prospect}
r(x,a) = \frac{2 (\mu\,(x-1))^{\tht(a)}}{1 +( \mu\, (x-1))^{\tht(a)}} -1 ,\quad
 \mu = 2/(\statedim-2),\;\tht(a) > 1 
\end{equation}
so  they cross zero at $x=\statedim/2$. The shape parameter $\tht(a)$ determines the slope of the  reward  curve $r(x,a)$.

Suppose the  agents (investment managers)  range from  $a=1$ (cautious) to  $a=\actiondim$ (aggressive); so the shape parameter $\tht(a) \uparrow a$.
The value of an investment evolves according to   Markov chain $\state_k$  with transition probabilities $\tp(a_k)$ based on   agent $a_k$. Since agent $a+1$ is more aggressive (risk seeking) than agent $a$ in losses and gains,  it incurs higher volatility. So the $x$-th row of $\tp(a)$ and $\tp(a+1)$ satisfy 
\begin{equation}
  \label{eq:volatility}
\tp_{x}(\action+1) \ls \tp_x(a), \;
x< \statedim/2 \quad\text{  and }  \tp_{x}(\action+1) \gs \tp_x(a), \; x \geq \statedim/2
\end{equation}
The aim is to choose the optimal agent $a_k$ at each time  $k$ to maximize the discounted infinite horizon reward.  Since  $\reward(x,a)$ is single crossing but not supermodular, \ref{supermod_cost} does not apply.

 \begin{corollary} \label{cor:prospect} Consider a discounted reward MDP with $r(x,a)$ specified by~\eqref{eq:prospect} and $\tht(\action) \uparrow a$.
   Assume~\ref{tps},  \eqref{eq:volatility} hold.
   Then Theorem~\ref{thm:IDmonotone} holds.
 \end{corollary}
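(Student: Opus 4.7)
The plan is to verify that the prospect theory MDP falls within the scope of Corollary \ref{cor:ex0}, which then delivers Theorem \ref{thm:IDmonotone} directly. So the work reduces to checking \ref{cost} and \ref{c1:r3} for the specific reward \eqref{eq:prospect} with a uniform (in $a$) choice of crossing state $x_a^*$.

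First, I would rewrite the reward in a transparent form by setting $u_x = \mu(x-1)$ and $v = u_x^{\theta(a)}$, which gives $r(x,a) = (v-1)/(v+1)$, a strictly increasing function of $v$. Since $u_x$ is increasing in $x$ and $v = u_x^{\theta(a)}$ is increasing in $u_x$ for any $\theta(a)>0$, this verifies assumption \ref{cost}. The key observation is that $u_x = 1$ exactly when $x = N/2$, because $\mu = 2/(N-2)$. So the monotonicity of $v$ in $\theta(a)$ flips at $x = N/2$: for $x < N/2$ we have $u_x < 1$, hence $v$ is decreasing in $\theta(a)$ and $r(x,a+1) \leq r(x,a)$; for $x > N/2$ we have $u_x > 1$, hence $v$ is increasing in $\theta(a)$ and $r(x,a+1) \geq r(x,a)$; at $x = N/2$, $v = 1$ and $r(N/2,a) = 0$ for every $a$.

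Combining this reward crossing (at $x_a^* = N/2$, uniform in $a$) with the transition crossing \eqref{eq:volatility} at the same state matches the hypothesis \ref{c1:r3} of Corollary \ref{cor:ex0}. Invoking Corollary \ref{cor:ex0} then implies that \ref{cost_id}, \ref{tp_id}, and \ref{sum} hold, so Theorem \ref{thm:IDmonotone} applies and the optimal stationary policy is monotone.

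The main technical nuisance is the boundary state $x = N/2$: \ref{c1:r3} literally requires both the $\leq$ and $\geq$ branches at $x_a^*$, which for rewards is fine since both differences are zero, but \eqref{eq:volatility} only provides the $\gs$ branch for $\tp_{N/2}(a+1)$ versus $\tp_{N/2}(a)$. I would resolve this by mimicking the case analysis in the proof of Corollary \ref{cor:ex0}: when $x$ and $\bar{x}$ lie on the same side of $N/2$ the argument of Corollary \ref{cor:ex0} goes through verbatim, while when $(x,\bar{x})$ straddles $N/2$ the left-hand sides of \ref{cost_id} and \ref{tp_id} are non-negative and the right-hand sides are non-positive, so both inequalities hold for every $\alpha,\beta>0$; thus a common choice $\alpha = \beta$ can be taken across all $(x,\bar{x},a)$, making \ref{sum} automatic.
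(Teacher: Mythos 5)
Your proposal is correct and follows exactly the paper's route: the paper's entire proof is the one-line remark that the result follows from Corollary~\ref{cor:ex0} with $x^*_a = \statedim/2$, and you have simply filled in the verification of \ref{cost} and \ref{c1:r3} (including the boundary state) that the paper leaves implicit. One small notational slip: the state dimension is $\statedim$ (the paper reserves $\horizon$ for the horizon), so the crossing state should be written $\statedim/2$ rather than $N/2$.
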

The proof  follows from  Corollary~\ref{cor:ex0}
 with $x^*_a = X/2$.

%Since \ref{cost}, \ref{tps} hold,  the value function %$\valueb(\state)$ is increasing in $\state$.

\subsection{Example 2. Interval Dominant Transition Probabilities}

\begin{corollary} \label{cor:ex2} Consider the discounted reward MDP
  with  $\reward(\state,\action)=\fun(\state)$ where $\fun$ is increasing and non-negative in $x$.
  Suppose the $i$-th row of transition  matrix  $\tp(\action)$ is 
\begin{equation}
\label{eq:3}
\tp_{i}(\action)= \pp + \mup_{i,\action}\,(e_\statedim-e_1)
\end{equation}
Here  $e_i$ denotes the unit $\statedim$-dimension row vector with 1 in the $i$-th position.
 $p$ is an arbitrary $\statedim$-dimensional probability row vector.
 Also
 $\mup_{1,a}=0$, $\mup_{i,\action} \in [0,1]$ are increasing in $i$, and satisfy \tID~\eqref{eq:id}.
 (Also, $\mup_{i,\action}\leq \min\{\pp_1, 1 - \pp_\statedim\}$ to ensure $\tp(\action)$ is valid transition matrix.)
  Then 
 Theorem~\ref{thm:IDmonotone} holds.
\end{corollary}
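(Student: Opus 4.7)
The plan is to verify the hypotheses of Theorem~\ref{thm:IDmonotone} directly by exploiting the rank-one structure of the perturbation $\mu_{i,a}(e_\statedim - e_1)$. First I would compute, for any $l \in \statespace$,
\[
\sum_{j \geq l} (e_\statedim - e_1)_j = \begin{cases} 0 & l=1 \\ 1 & 2 \leq l \leq \statedim \end{cases}
\]
so that
\[
\sum_{j\geq l} \tp_{i,j}(\action) = \sum_{j\geq l} \pp_j + \mup_{i,\action}\,\mathbf{1}\{l\geq 2\}.
\]
This single identity reduces every stochastic-dominance or interval-dominance condition on the rows of $\tp(\action)$ to a corresponding inequality on the scalars $\mup_{i,\action}$.

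With this computation in hand I would check each assumption in turn. Assumption~\ref{cost} is immediate since $\reward(\state,\action)=\fun(\state)$ is increasing in $\state$. For \ref{tps}, the reduction above and the hypothesis that $\mup_{i,\action} \uparrow i$ give $\sum_{j\geq l}\tp_{\state,j}(\action) \leq \sum_{j\geq l}\tp_{\state+1,j}(\action)$ for every $l$. For \ref{cost_id}, since $\reward(\state,\action)=\fun(\state)$ does not depend on $\action$, both sides of \eqref{eq:2} are identically zero, so \ref{cost_id} holds trivially for any $\be{\pair}>0$. For \ref{tp_id}, the reduction above turns \eqref{eq:IDTP} into
\[
\mup_{\bstate,\action+1} - \mup_{\bstate,\action} \geq \al{\pair}\,\bigl(\mup_{\state,\action+1} - \mup_{\state,\action}\bigr), \quad \bstate > \state,
\]
which is precisely the \tID condition assumed on $\mup_{i,\action}$; the $l=1$ case is vacuous since both sides are zero. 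Finally, for \ref{sum}, since \ref{cost_id} is vacuous I can simply take $\be{\pair}=\al{\pair}$, where $\al{\pair}$ is the multiplier witnessing \tID of $\mup_{i,\action}$.

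Having verified \ref{cost}, \ref{tps}, \ref{cost_id}, \ref{tp_id}, and \ref{sum}, I would invoke Theorem~\ref{thm:IDmonotone}(i) to conclude that there exists an optimal stationary policy $\optpolicy(\state)$ that is increasing in $\state$. The verification is entirely computational, and there is no real obstacle: the only non-trivial step is recognizing that summing the rank-one perturbation $e_\statedim - e_1$ over tail sets yields the indicator $\mathbf{1}\{l\geq 2\}$, which collapses all the matrix-level dominance conditions to scalar conditions on $\mup_{i,\action}$ that are handed to us by hypothesis.
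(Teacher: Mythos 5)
Your proof is correct and follows essentially the same route as the paper: both reduce the dominance and \tID conditions on the rows of $\tp(\action)$ to scalar conditions on $\mup_{i,\action}$ via the tail sums of the rank-one perturbation $e_\statedim - e_1$, observe that \ref{cost_id} is vacuous since the reward is action-independent, and set $\be{} = \al{}$ to satisfy \ref{sum}. Your explicit identification of the tail sum as the indicator $\mathbf{1}\{l \geq 2\}$ is a slightly sharper statement of what the paper records only as non-negativity, but the argument is the same.
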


Compared to supermodularity~\ref{tp_supermod} of the transition probabilities, Corollary~\ref{cor:ex2} imposes  weaker conditions:  
$\mup$ satisfy \tID~\eqref{eq:id} and $\pp$ can be any probability vector. Since $\mup$ only needs to satisfy \tID (suitably scaled and shifted to ensure valid probabilities), \eqref{eq:3} offers considerable flexibility in choice of  the transition  matrices.

\begin{proof}
Reward $\reward(\state,\action) = \fun(\state)$ satisfies~\ref{cost}, \ref{cost_id} for all $\be{\pair} > 0$.
  Also $\mup_{\state,\action} \uparrow \state$ implies~\ref{tps} holds.
  Next let us verify~\ref{tp_id}. Using~\eqref{eq:3}, we need to verify 
\begin{equation}
\label{eq:mupcondition}
  (\mup_{\bstate,\action+1}-\mup_{\bstate,\action})\, \sum_{j\geq l} (e_\statedim - e_1)^\p e_j  \geq \al{\pair}
  \big[  (\mup_{\state,\action+1}-\mup_{\state,\action})\, \sum_{j\geq l} (e_\statedim - e_1)^\p e_j \big]
\end{equation}
where $\al{\pair}>0$ is increasing in $\action$.  Since $ \sum_{j\geq l} (e_\statedim - e_1)^\p e_j \geq 0$, clearly $\mup_{i,a}$ satisfying~\eqref{eq:id} for some $\al{\pair} > 0$ increasing in $\action$ is a sufficient condition for~\eqref{eq:mupcondition} to hold. 
%
%  Finally, elementary calculations show that~\ref{tp_id} is equivalent to $\mup_{\state,\action}$ satisfying~\eqref{eq:id} for some $\al{\pair} > 0$.
  Since the choice of $\be{\pair}>0$ is unrestricted, we can choose
  $\be{\pair} = \al{\pair}$. Hence~\ref{sum} holds. Thus Theorem~\ref{thm:IDmonotone} holds.
\end{proof}

   {\bf Example}.  Suppose
%   $\statedim=201, \actiondim=3$,  $\reward(\state,\action) = \state$,
$\pp$ is an arbitrary probability vector, and $\mup$ is chosen as the rewards~\eqref{eq:ex1rewards} suitably scaled and shifted. Then the transition matrices inherit the sigmoidal and concave structures of 
Sec.~\ref{sec:ex1}.

\begin{comment}
{\bf Example}. As an example of Corollary~\ref{cor:ex2}, consider 
an  MDP with parameters  $\discount = 0.9$, $\statedim=4$, $\actiondim=3$, $ \horizon=100$,
$\reward(\state,\action) = \state$,
$$ \tp(a) = \begin{bmatrix}  \pp \\ \pp + \mup_{2,\action}  (e_X-e_1) \\
   \pp + \mup_{3,\action}  (e_X-e_1)\\
  \pp + \mup_{4,\action}  (e_X-e_1)\\
\end{bmatrix},\;
\pp = [0.4,\; 0.3,\; 0.2,\; 0.1],
\mup = \begin{bmatrix} 0 & 0 & 0 \\
  0.12 & 0.4 & 0 \\ 0.16 & 0.22 & 0.18 \\ 0.22 & 0.23 & 0.2
\end{bmatrix}
$$
\end{comment}

%\section{}
%\label{sec:perturbed}

\subsection{Example 3. Discounted MDP with Perturbed Bi-diagonal Transition Matrices}
\label{sec:discbi}
This section illustrates the \tID condition in  MDPs with perturbed bi-diagonal transition matrices. 
The E-companion discusses an example in optimal allocation  problems  with penalty costs \citep{Ros83,DLR76}.  It also has  applications in wireless transmission control~\citep{NK10}.

Consider an infinite horizon discounted reward MDP.
The action-dependent   transition matrices
 $\tpe(\action), \action \in \actionspace$ specified by
 parameter $\param{\action} \in [0,1]$ are 
 \begin{equation}
    \begin{split}
      \tpe_{11}(\action)&=1 -(\actiondim-\action)\,\saram , \quad  \tpe_{1,\statedim}(\action) =(\actiondim-\action) \,\saram,   \quad    \tpe_{\statedim,\statedim-1}(\action) = \param{\action}, \quad
        \tpe_{\statedim,\statedim}(\action) =1- \param{\action} \\
      %\quad
     % \tpe_{\statedim,\statedim}=1, \\
      \tpe_{ii}(\action) &=  1-\param{\action}- (\actiondim-\action)\,\saram,\quad \tpe_{i,i-1}(\action)= \param{\action} , \quad \tpe_{i,\statedim}(\action) =(\actiondim-\action) \saram, \quad i= 2,\ldots,\statedim-1 
     \end{split}
    \label{eq:bidiagonale}
  \end{equation}
  where $\saram\ll1$ is a small positive real.
  We assume that $\param{\action}$ is increasing in $\action$.
  When $\saram = 0$, $\tpe(a)$ are bi-diagonal transition matrices;  so $\saram$ can be viewed as a perturbation probability of a bi-diagonal transition matrix. 

Supermodularity    \ref{tp_supermod}  of the transition matrices~\eqref{eq:bidiagonale}  holds if  $\saram \geq \param{\action+1}-  \param{\action}$. 
  In this section we  assume $\saram$ is a small parameter with   $\saram
  \leq \min_a \param{\action+1}- \param{\action}$, so that  \ref{tp_supermod}  does not  hold. Therefore,  textbook Proposition~\ref{res:textbook} does not hold. We  show
  how the \tID condition and Theorem~\ref{thm:IDmonotone} apply.

  {\em Remark.} In our result below, to show condition \tID holds, we choose $\al{\action}  = \be{\action} =( \param{\action+1} - \param{\action})/\saram = \ga{\action}$.
If $\param{\action}$ is differentiable wrt $\action$, then 
as $\saram \rightarrow 0$, i.e., for an MDP with bi-diagonal transition matrices, this can be interpreted as
choosing $\al{\action}  = \be{\action} = d\param{a}/da$.

%  The following result illustrates \tID for  discounted reward infinite horizon MDPs with perturbed bi-diagonal matrices. 
%For notational convenience define
%$$\da{a} = \param{a+1}-\param{a}
%, \quad \pu = \max_a (\param{a+1} - \param{a})
%$$

\begin{corollary} Consider a discounted cost  MDP with transition probabilities~\eqref{eq:bidiagonale}.
  Assume $\param{\action}$ is increasing in $\action$ and $\param{\action+1}-\param{\action} = \ga{a} \saram$ for some positive real number $\ga{a}$ increasing in $\action$.  Assume~\ref{cost} and that 
  \begin{equation}
    \label{eq:cor-bidiag}
    \reward(i+1,a+1) - \reward(i+1,a) \geq \be{\action} \,[\reward(i,a+1) - \reward(i,a)]
  \end{equation}
  for some $\be{\action}$ increasing in $\action$ with $\be{\action} \geq \ga{\action}$.
  Then  optimal policy $\policy^*(\state) \uparrow \state$.
  \label{cor:bi1}
\end{corollary}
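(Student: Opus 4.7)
The plan is to verify the hypotheses of Theorem~\ref{thm:IDmonotone}(i) with the common ID constant $\al(\pair) = \be{\action}$, so that \ref{sum} is automatic, the reward ID \ref{cost_id} is exactly the hypothesis~\eqref{eq:cor-bidiag}, and \ref{cost} is assumed. The remaining work is to check stochastic monotonicity \ref{tps} and the transition ID \ref{tp_id} directly from the row structure of~\eqref{eq:bidiagonale}.

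First I would tabulate the tail sums $F_i(l,\action) := \sum_{j\geq l}\tp^\saram_{i,j}(\action)$. Because each row of~\eqref{eq:bidiagonale} has at most three nonzero entries, $F_i(l,\action)$ takes only one of the three values $1$, $1-\param{\action}$, or $(\actiondim-\action)\saram$ (with $0$ beyond column $\statedim$), depending on whether $l$ is strictly below, at, or strictly above the diagonal of row $i$; the edge rows $i=1$ and $i=\statedim$ are routine specialisations. From this three-valued table, \ref{tps} reduces to the single inequality $1-\param{\action} \geq (\actiondim-\action)\saram$, which is the implicit smallness of $\saram$ already needed to keep~\eqref{eq:bidiagonale} a valid transition matrix.

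The substantive step is \ref{tp_id}. Differencing across $\action$, each $\dtp_i(l,\action) := F_i(l,\action+1) - F_i(l,\action)$ lies in $\{0,\,-\ga{\action}\saram,\,-\saram\}$ (using $\param{\action+1}-\param{\action} = \ga{\action}\saram$). A short case split on the position of $l$ relative to $(\state,\bstate)$ then reduces $\dtp_{\bstate}(l,\action) \geq \al_a\, \dtp_{\state}(l,\action)$ to exactly two constraints on $\al_a$: namely $\al_a \geq \ga{\action}$ (which binds when $l = \bstate$ is the diagonal of the higher row, giving ratio $\ga{\action}$) and $\al_a \geq 1$ (which binds when $l > \bstate$ with both rows interior, where $\dtp_{\state}=\dtp_{\bstate}=-\saram$). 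The first is handled by the hypothesis $\be{\action} \geq \ga{\action}$; the second by the standing assumption of Sec.~\ref{sec:discbi} that $\saram \leq \min_a(\param{\action+1}-\param{\action})$, which forces $\ga{\action} \geq 1$ and hence $\be{\action} \geq 1$. Monotonicity of $\al_a = \be{\action}$ in $\action$ is precisely the monotonicity assumed of $\be{\action}$, so Theorem~\ref{thm:IDmonotone}(i) then delivers $\optpolicy(\state)\uparrow\state$.

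The hard part is the ``uniformity in $l$'' built into \ref{tp_id}: a single $\al(\pair)$ must dominate every one of the row-sum differences simultaneously, and the far-off case $l > \bstate$ forces $\al \geq 1$ on top of the $\al \geq \ga{\action}$ needed at the diagonal of row $\bstate$. What rescues the argument is exactly the coupling of the perturbation size $\saram$ to the increments of $\param{\action}$ that was used in Sec.~\ref{sec:discbi} to rule out supermodularity \ref{tp_supermod}: that same coupling guarantees $\ga{\action} \geq 1$, so the single constant $\be{\action}$ can simultaneously serve as the ID constant for the rewards and for the transition matrices.
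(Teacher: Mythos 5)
Your proposal is correct and follows essentially the same route as the paper's proof: verify \ref{cost}, \ref{tps}, \ref{cost_id}, \ref{tp_id}, \ref{sum} and invoke Theorem~\ref{thm:IDmonotone}(i) with $\al{\action}=\be{\action}$. Your tabulation of the tail-sum differences is in fact more careful than the paper's one-line verification of \ref{tp_id}: you correctly surface the extra constraint $\al{\action}\geq 1$ arising from the case $l>\bstate$, which the paper leaves implicit and which is absorbed by the standing assumption $\saram\leq\min_a(\param{\action+1}-\param{\action})$ of Sec.~\ref{sec:discbi}, i.e.\ $\ga{\action}\geq 1$.
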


\begin{proof}
  We  verify that the assumptions in Theorem~\ref{thm:IDmonotone} hold.
\ref{cost} holds by assumption.
  From the structure of $\tpe(\action)$ in~\eqref{eq:bidiagonale} it is clear that~\ref{tps} holds.
Considering actions $\action$ and $\action+1$,
it is  verified that~\ref{tp_id} holds for all $\al{\action} \geq (\param{a+1}-\param{a})/\saram = \ga{a}$.
Next by assumption~\eqref{eq:cor-bidiag}, \ref{cost_id} holds
for $\be{\action}\geq  \ga{\action}$.
Finally, we can choose $
\al{\action}=\be{\action} \geq \ga{\action}$, and so  \ref{sum} holds. 
\end{proof}

\noindent {\bf Example}.  $\actiondim=2, \statedim=6$,
$\param{1}=0.3$, $\param{2}=\param{1}+20 \saram$, $\saram=10^{-3}$,  $\discount=0.9$, $\horizon=200$,
$\reward^\p = \begin{bmatrix}
  1 & 3.5 & 6 & 6  &  11 & 43 \\
  0  & 2 & 3 & 6 & 12 & 63 
\end{bmatrix}$.
Given the  transition probabilities, we choose $\al\geq 20$.
Also for the rewards,  we  choose $\be=20$ in~\eqref{eq:cor-bidiag}.
So Corollary~\ref{cor:bi1} holds.
%The value  iteration algorithm was run for 200 iterations to compute $\policy^*$.
Figure~\ref{fig:bidiag1} shows  $Q_\horizon(\state,\action)$ is not supermodular, yet the optimal policy is monotone with $\policy^*(i) = 1$ for $i\in \{1,2,3,4\}$ and $\policy^*(i) = 2$ for $i \in \{5, 6\}$. 

 \begin{figure}
    \centering
    \includegraphics[scale=0.4]{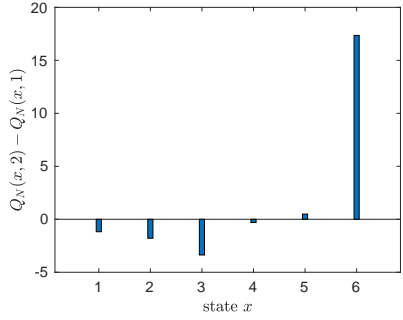}
    \caption{The $Q$-function is  not-supermodular for an MDP with perturbed bi-diagonal matrices; yet the optimal policy $\policy^*(\state)$  is increasing in state  $\state$ by Corollary~\ref{cor:bi1}.}
    \label{fig:bidiag1}
  \end{figure}

\section{Example 4. MDPs with Concave Value Functions} \label{subsec:concave}

Theorem~\ref{thm:IDmonotone}  used  first order dominance and monotone costs to establish \tID and therefore  monotone optimal policies. In comparison,   this section
extends Theorem~\ref{thm:IDmonotone}  to MDPs where the value function is concave. 
We use second order stochastic dominance and concave costs to establish \tID and therefore monotone optimal policies. The results below assume a  TP3  transition matrix;
see~\citet{Kar68} for the rich structure involving their diminishing variation property.
For convenience we minimize costs instead of maximize rewards.

\begin{enumerate}[label=(C{\arabic*})]
\item \label{convexcost}
Costs  $\cost(\state,\action)$ are  increasing and concave in $\state$ for each $\action$.
%\item[\ref{convexcost}']   $\cost(\state,\action)$ is decreasing and concave in $\state$ for each $\action$.
\item\label{item:TP3}
  $\tp(\action)$ is TP3with $\sum_{j=1}^\statedim j \tp_{ij}(\action)$  increasing and concave in $i$. Totally positive of order 3  means that each 3rd order minor of $\tp(\action)$ is non-negative. 
%  \item[\ref{item:TP3}']   $\tp(\action)$ is TP3  with $\sum_{j=1}^\statedim j \tp_{ij}(\action)$  increasing and convex in $i$.
\item\label{item:submodc}  For  $\be{\pair} > 0$ and increasing in $\action$, 
%\begin{equation}
%\label{eq:submodc}
$\cost(\bstate,\action+1) - \cost(\bstate,\action) \geq
\be{\pair} \, \big[\cost(\state,\action+1) - \cost(\state,\action) \big], \quad \bstate > \state$.
%\end{equation}
%\item[\ref{item:submodc}'] Inequality~\eqref{eq:submodc} is reversed.
\item\label{item:tpsuperconv}
  For $\al{\pair}> 0$ and increasing in $\action$,
  %\begin{equation}
  %  \label{eq:tpsuperconv}
   $ \frac{ \tp_{\bstate}(\action+1) + \al{\pair}\,\tp_\state(\action)}
    {1+\al{\pair}}
    \gtwo    \frac{ \tp_{\bstate}(\action) + \al{\pair}\,\tp_\state(\action+1)}
    {1+\al{\pair}}, \quad \bstate > \state$
%  \end{equation}
  where $\gtwo$ denotes second order stochastic dominance.\footnote{If $p,q$ are probability vectors, then $p\gtwo q$ if  $\sum_{l\leq m} \sum_{j \leq l }p_{j}  \leq \sum_{l\leq m} \sum_{j \leq l } q_j $ for each $m$. Equivalently,
    $p \gtwo q$ iff $f^\p p \geq f^\p q$ for vector $f $ increasing and concave.
    Recall $^\prime$ denotes transpose.}
 % \item[\ref{item:tpsuperconv}'] \eqref{eq:tpsuperconv} holds with $\gtwo$ replaced by $\gc$.
  \item\label{item:terminalconv} Terminal cost $\terminal{\state}$ is increasing and concave   in $\state$.
  %  \item[\ref{item:terminalconv}'] Terminal cost $\terminal{\state}$ is decreasing  and concave   in $\state$.
  
%\item\label{item:albe}  \ref{sum} holds.
\end{enumerate}

{\em Remarks}. (i) As shown in the proof, \ref{convexcost} (concavity), \ref{item:TP3},  \ref{item:terminalconv} imply  the value function is concave and increasing.
These together with~\ref{item:submodc},~\ref{item:tpsuperconv} and~\ref{sum} imply  \tID holds and so the optimal policy is monotone.

(ii)
\ref{item:TP3}  generalizes  the assumption that 
$\sum_j j {P}_{ij} $ is linear increasing in~$i$. 
The classical result in~\citet[pg 23]{Kar68} states:
Suppose $P$ is a TP3 transition matrix and  $\sum_j j {P}_{ij} $ is linear increasing in $i$. If vector $\valuef$ is  concave, then vector $P \, \valuef$ is  concave.
However, for bi-diagonal and tri-diagonal transition matrices, $\sum_j j \tp_{ij} $ is concave (or convex) and  not linear in~$i$ (see examples below). This is why we introduced~\ref{item:TP3}. Since the classical result requires   
 $\sum_j j {P}_{ij} $ being  linear in $i$,
it no longer applies. So
we will prove a small generalization that  handles the case
where $\sum_j j \tp_{ij} $ is concave in $i$ (see Lemma~\ref{lem:TP3concave} below).

\begin{theorem} \label{thm:convexdom}
  (i) For a discounted cost
  MDP under \ref{convexcost}-\ref{item:tpsuperconv}, \ref{sum}, optimal policy $\optpolicy(\state) \downarrow \state$.\\  (ii) For a finite horizon
  MDP, under \ref{convexcost}-\ref{item:terminalconv}, \ref{sum}, optimal policy
  sequence $\optpolicy_k(\state)$, $k=0,\ldots,\horizon$
 $\downarrow \state$.
\end{theorem}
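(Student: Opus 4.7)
The plan is to mirror the proof of Theorem~\ref{thm:IDmonotone}, with two substitutions throughout: ``$\valueb$ increasing'' becomes ``$\valueb$ increasing \emph{and} concave'', and first-order stochastic dominance is replaced by second-order dominance $\gtwo$.

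\textbf{Step 1 (value function increasing and concave).} I would induct on Bellman's recursion~\eqref{eq:finitedp} to show that $\valueb_k(\state)$ is increasing and concave in $\state$ for every $k$, and likewise for the fixed point $\valueb$ via value iteration in the infinite-horizon discounted case. For the finite-horizon base case, $\valueb_{\horizon}(\state)=\terminal{\state}$ is increasing and concave by~\ref{item:terminalconv}. For the inductive step, assume $\valueb_{k+1}$ is increasing and concave; by~\ref{item:TP3} together with the extension of Karlin's TP3 theorem announced in the Remarks (Lemma~\ref{lem:TP3concave}), the map $i\mapsto \sum_j \tp_{ij}(\action)\valueb_{k+1}(j)$ is then also increasing and concave for each $\action$. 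Adding the increasing concave cost $\cost(\cdot,\action)$ from~\ref{convexcost} preserves both properties, and taking the pointwise minimum over $\action$ preserves them as well (the infimum of increasing concave functions is increasing and concave). Hence $\valueb_k$ is increasing and concave.

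\textbf{Step 2 (\tID on $\Q$ and monotone minimizer).} Using the equivalence $p\gtwo q \iff f^\p p \geq f^\p q$ for every increasing concave $f$, I apply~\ref{item:tpsuperconv} with $f=\valueb$ (justified by Step 1) to obtain, for $\bstate>\state$,
\begin{equation*}
\sum_{j=1}^{\statedim}[\tp_{\bstate,j}(\action+1)-\tp_{\bstate,j}(\action)]\,\valueb(j) \geq \al{\pair}\sum_{j=1}^{\statedim}[\tp_{\state,j}(\action+1)-\tp_{\state,j}(\action)]\,\valueb(j),
\end{equation*}
which is the second-order analog of~\eqref{eq:step1}. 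Combining this with~\ref{item:submodc} for the one-stage cost, and choosing $\ga{\pair}=\al{\pair}=\be{\pair}$ as permitted by~\ref{sum}, summation yields
\begin{equation*}
\Q(\bstate,\action+1)-\Q(\bstate,\action)\geq \ga{\pair}\,[\Q(\state,\action+1)-\Q(\state,\action)], \quad \bstate>\state,
\end{equation*}
so $(\Q,\gamma)\in\ID$. Because the MDP minimizes cost, this interval-dominance property on $\Q$ forces $\optpolicy(\state)\in\argmin_\action \Q(\state,\action)\downarrow\state$, which is the minimization counterpart of~\eqref{eq:argmax} used in Theorem~\ref{thm:IDmonotone}.

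\textbf{Main obstacle.} The substantive work lives in Step 1 -- specifically, in the TP3 concavity-propagation Lemma~\ref{lem:TP3concave}. The classical Karlin theorem requires $\sum_j j\tp_{ij}$ to be \emph{linear} in $i$, but the bi- and tri-diagonal transition examples that motivate this paper have strictly concave first moments, so the classical result does not apply. I would prove the generalization by writing out the second difference $[\tp\valuef](i-1)-2[\tp\valuef](i)+[\tp\valuef](i+1)$, performing summation by parts twice to re-express it as a sum of $3\times3$ minors of $\tp$ weighted by first differences of $\valuef$, and then using the TP3 hypothesis (non-negativity of $3\times3$ minors) together with concavity of $\sum_j j\tp_{ij}$ to absorb the boundary term that vanishes only under the classical linear-first-moment hypothesis. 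This delivers non-positive second differences of $\tp\valuef$, i.e., concavity, completing the induction.
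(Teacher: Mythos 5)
Your proof of the theorem itself is essentially the paper's own: the same two-stage induction showing $\valuef_k$ is increasing (TP3 implies TP2, which preserves monotonicity) and then concave (via Lemma~\ref{lem:TP3concave}), closure of both properties under adding the increasing concave cost of \ref{convexcost} and under pointwise minimization, then the dual characterization of $\gtwo$ applied with $f=\valuef$ to obtain the second-order analogue of \eqref{eq:step1} from \ref{item:tpsuperconv}, and finally \ref{item:submodc} and \ref{sum} to conclude $(\Q,\gamma)\in\ID$ and hence a decreasing minimizer. The one place you genuinely depart from the paper is the proof you sketch for Lemma~\ref{lem:TP3concave}, which you correctly identify as the substantive step. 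The paper does not expand second differences into $3\times 3$ minors; it invokes the variation-diminishing property of TP3 kernels: for $a>0$ and $b\in\reals$, $\valuef(j)-(aj+b)$ has at most two sign changes in the order $-,+,-$, hence so does $\sum_j\tp_{ij}\valuef_j-\fun_i(a,b)$ where $\fun_i(a,b)=\sum_j\tp_{ij}(aj+b)$; since $\fun_i(a,b)$ is concave in $i$ by \ref{item:TP3} it lies above its chords, and choosing $a,b$ to interpolate $\sum_j\tp_{ij}\valuef_j$ at arbitrary $i_1<i_2$ then delivers the chord inequality that characterizes concavity of $\tp\valuef$. Your alternative --- double summation by parts reducing the second difference of $\tp\valuef$ to minor-like quantities plus a boundary term --- is only an outline, and the step carrying all the weight is asserted rather than proved: you would need to show that TP3 forces the correct sign on the doubly-tail-summed row second differences $\sum_{m\ge l}\sum_{k\ge m}\bigl(\tp_{i+1,k}-2\tp_{i,k}+\tp_{i-1,k}\bigr)$, which is a second-order dominance statement between adjacent rows and not literally a nonnegative $3\times 3$ minor, and separately that the surviving boundary term (the second difference of the first moment times a first difference of $\valuef$) has the right sign, which is where concavity of $\sum_j j\tp_{ij}$ and monotonicity of $\valuef$ must enter. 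If you want a self-contained argument, either carry that computation through in full or adopt the sign-change argument, which sidesteps it; everything else in your write-up matches the paper.
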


\begin{corollary}\label{cor:convexdom}
Consider the modified assumptions: \ref{convexcost}: increasing replaced by decreasing; \ref{item:TP3} concave replaced with convex; \ref{item:submodc}:  inequality involving costs  reversed; \ref{item:tpsuperconv}:  $\gtwo$ replaced by convex dominance\footnote{If $p,q$ are probability vectors, then $p\gc q$ if  $\sum_{l\geq m} \sum_{j \geq l }p_{j}  \geq \sum_{l\geq m} \sum_{j \geq l } q_j $ for each $m$. Equivalently,
    $p \gc q$ iff $f^\p p \geq f^\p q$ for $f $ increasing and convex.} $\gc$; \ref{item:terminalconv}: increasing replaced by decreasing. Under these assumptions and \ref{sum}, Theorem~\ref{thm:convexdom} holds with the modification  $\optpolicy(\state)$ and  $\optpolicy_k(\state)$, are increasing in $\state$.
\end{corollary}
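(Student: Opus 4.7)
The plan is to parallel the proof of Theorem~\ref{thm:convexdom}, swapping concavity for convexity, increasing for decreasing, second-order stochastic dominance $\gtwo$ for convex dominance $\gc$, and reversing signs of inequalities where appropriate.

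First, I would show by induction on the Bellman recursion~\eqref{eq:finitedp} (value iteration in the infinite-horizon case) that $\valuef_k(\state)$ is decreasing and convex in~$\state$. The base case is immediate from the modified~\ref{item:terminalconv}. For the inductive step I need a convex counterpart to Lemma~\ref{lem:TP3concave}: if $\tp(\action)$ is TP3 with $\sum_j j\,\tp_{ij}(\action)$ increasing convex in~$i$, and $\valuef$ is decreasing convex, then $\tp(\action)\valuef$ is decreasing convex. The ``decreasing'' part is a consequence of TP3 $\Rightarrow$ TP2 $\Rightarrow$ MLR, which yields first-order stochastic dominance of the rows of $\tp(\action)$, so $\tp_{\bstate}(\action)\valuef \leq \tp_\state(\action)\valuef$ whenever $\bstate > \state$ and $\valuef$ is decreasing. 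Convexity preservation is obtained by mirroring the proof of Lemma~\ref{lem:TP3concave} on the convex side, most cleanly via a state-reflection $\tilde\state = \statedim+1-\state$ that converts the problem into an increasing-convex setting. Combining this with the modified~\ref{convexcost} makes $\Q_k(\state,\action)$ and hence $\valuef_k(\state)$ decreasing convex.

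Second, since $\argmin_\action \Q(\state,\action) = \argmax_\action(-\Q(\state,\action))$, an increasing optimal policy is equivalent to $-\Q$ satisfying interval dominance~\eqref{eq:id}. For the reward term, the reversed inequality in modified~\ref{item:submodc} is precisely the statement that $-\cost$ satisfies~\eqref{eq:id} with constant $\be{\action}$. For the transition term, modified~\ref{item:tpsuperconv} asserts that the mixture distributions $\mu_1 = (\tp_{\bstate}(\action+1) + \al{\action}\,\tp_\state(\action))/(1+\al{\action})$ and $\mu_2 = (\tp_{\bstate}(\action) + \al{\action}\,\tp_\state(\action+1))/(1+\al{\action})$ satisfy $\mu_1 \gc \mu_2$. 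Evaluating this convex-order inequality against the convex value function produced in Step~1 (via the reflection argument when needed) yields
\[
\sum_j \bigl[\tp_{\bstate,j}(\action) - \tp_{\bstate,j}(\action+1)\bigr]\valuef(j) \;\geq\; \al{\action} \sum_j \bigl[\tp_{\state,j}(\action) - \tp_{\state,j}(\action+1)\bigr]\valuef(j),
\]
which is exactly the transition-side contribution for $-\Q$ to satisfy~\eqref{eq:id}.

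Finally, \ref{sum} allows us to pick $\ga{\action} = \al{\action} = \be{\action}$; summing the reward and transition contributions gives $(-\Q,\gamma)\in\ID$. The Quah--Strulovici argument underlying Theorem~\ref{thm:IDmonotone} then yields $\argmax_\action(-\Q(\state,\action)) = \optpolicy(\state)$ increasing in $\state$, for both the finite and infinite horizon cases. The main obstacle I anticipate is establishing the convex counterpart of Lemma~\ref{lem:TP3concave} and correctly pairing the convex-dominance order $\gc$ (whose characterization is in terms of \emph{increasing} convex test functions) against a \emph{decreasing} convex value function. I expect the cleanest route is the state-reflection $\state \mapsto \statedim+1-\state$, which simultaneously turns decreasing convex into increasing convex and, one must verify, preserves TP3 and converts the $\gc$ inequality into its standard form, so that the classical Karlin variation-diminishing machinery and the footnote characterization of $\gc$ apply directly.
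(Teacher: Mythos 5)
There is a genuine gap, and it comes from misreading which properties flip under the modified assumptions. The corollary only replaces ``increasing'' by ``decreasing'' in \ref{convexcost} and \ref{item:terminalconv}; the costs and terminal cost remain \emph{concave} (see the tri-diagonal numerical example, where $\cost(x,a)=-(\tht+\tht' x^3)$ is decreasing and concave). Consequently the value function you must produce in Step 1 is decreasing and \emph{concave}, not decreasing and convex. With that reading everything pairs up cleanly: $-\valuef$ is increasing and convex, so the $\gc$ inequality in modified \ref{item:tpsuperconv} applied to $-\valuef$ immediately gives the reversed version of \eqref{eq:step1}, which together with the reversed \ref{item:submodc} and \ref{sum} shows $(-\Q,\ga)\in\ID$ and hence $\optpolicy(\state)\uparrow\state$ --- no reflection argument is needed. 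The role of the convexity flip in \ref{item:TP3} is precisely to make the Lemma~\ref{lem:TP3concave} argument go through for \emph{decreasing} concave $\valuef$: the chord slope $a=\frac{\sum_j \valuef_j(\tp_{i_2,j}-\tp_{i_1,j})}{\sum_j j(\tp_{i_2,j}-\tp_{i_1,j})}$ is now negative (numerator negative because $\tp\valuef$ is decreasing, denominator positive), so $\fun_i(a,b)=a\sum_j j\tp_{ij}+b$ is concave in $i$ exactly when $\sum_j j\tp_{ij}$ is \emph{convex}, and the sandwich argument closes.

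Your proposed route cannot be repaired as stated, for two independent reasons. First, the ``convex counterpart'' of Lemma~\ref{lem:TP3concave} that you need --- TP3 with $\sum_j j\tp_{ij}$ increasing convex maps decreasing convex $\valuef$ to convex $\tp\valuef$ --- is false: with $\valuef$ decreasing the chord slope $a$ is negative, so $\fun_i(a,b)$ is concave and lies \emph{above} the chord, and the variation-diminishing sandwich points the wrong way. A concrete counterexample is $\statedim=3$, $\tp$ with rows $(1,0,0),(1,0,0),(0,0,1)$ (TP3, with $\sum_j j\tp_{ij}=(1,1,3)$ increasing convex) and $\valuef=(2,0,0)$ decreasing convex, for which $\tp\valuef=(2,2,0)$ is not convex. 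The state reflection $\state\mapsto\statedim+1-\state$ does not rescue this: it converts $\sum_j j\tp_{ij}$ from convex to concave and converts the $\gc$ test class (increasing convex) into decreasing convex functions, so you land in another non-working combination. Second, even if each $\Q_k(\cdot,\action)$ were convex, the induction would break at $\valuef_k=\min_\action \Q_k(\cdot,\action)$, since minimization does not preserve convexity --- the paper itself flags this obstruction in Discussion 2 of Sec.~\ref{sec:conclusions} as the reason convex value functions require multimodularity. Concavity, by contrast, survives the $\min$, which is why the corollary is engineered around a decreasing concave value function.
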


\begin{proof} (Theorem~\ref{thm:convexdom})
 We prove statement (ii). The proof of statement (i) is similar and omitted.

 First we show by induction that $V_k(i)$ is increasing in $i$ for  $k=\horizon,\ldots,1$. By~\ref{item:terminalconv},  $V_\horizon(i) = \terminal{i}$ is increasing. 
Assume $V_{k+1} (i)$ is increasing in $i$.
TP3  assumption~\ref{item:TP3} implies TP2 which preserves monotone functions  \citep[pg 23]{Kar68}, \citep{LC98}, namely,  $\sum_j \tp_{ij}(\action) V_{k+1}(j)$ is increasing in $i$.  This together
 with~\ref{convexcost} implies $Q_k(i,a)$ is increasing. Thus $V_k(i) = \min_a Q_k(i,a)$ is increasing in $i$.
  
  Next  we show by induction that  $V_k(i) $ is concave in $i$.
By~\ref{item:terminalconv}, $V_\horizon= \terminal$ is concave.
Assume $V_{k+1}$ is concave. Then~\ref{item:TP3}
implies $\sum_j \tp_{ij}(\action) \,\valuef_{k+1}(j)$ is concave in $i$ (see Lemma~\ref{lem:TP3concave} below).
%\footnote{The classical result in \cite[pg 23]{Kar68} states:
 %  Suppose $P$ is a TP3 transition matrix and  $\sum_j i \bar{P}_{ij} $ is linear increasing in $i$. If vector $\valuef$ is  concave, then vector $P \, \valuef$ is  concave. We use a slight generalization  of this  result \cite{Cha81}: 
 %  Suppose  $\tp$ satisfies~\ref{item:TP3}.  If $\valuef$ is  concave and increasing, then  $\tp \, \valuef$ is  concave and increasing.  \label{foot:chan}}
Since $\cost(i,a)$ is concave by~\ref{convexcost}, it follows that
 $Q_{k}(i,a) = \cost(i,a) + \sum_j \tp_{ij}(\action) \, \valuef_{k+1}(j) $ is concave in $i$.  Since concavity is preserved by minimization, $V_k(i) = \min_a Q_k(i,a) $ is concave.
 Finally, $V_k(i)$ increasing and concave in $i$  and~\ref{item:tpsuperconv} implies~\eqref{eq:step1} holds for all  $\al{\pair} \geq 1$. Then with~\ref{item:submodc},\ref{sum}, the proof is  identical to~\eqref{eq:sumstep} in Theorem~\ref{thm:IDmonotone}.
\end{proof}

The following lemma used in the proof of   Theorem~\ref{thm:convexdom} 
%is a small extension of
slightly extends the result in~\citet[pg 23]{Kar68}.

\begin{lemma} \label{lem:TP3concave}
  Suppose $\tp$ satisfies~\ref{item:TP3}.  If $\valuef$ is  concave and increasing, then  $\tp \, \valuef$ is  concave and increasing. 
\end{lemma}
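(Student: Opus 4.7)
The plan is to verify the two conclusions separately. Monotonicity of $PV$ is immediate: assumption~\ref{item:TP3} includes TP3, hence TP2, and TP2 preserves monotone functions (the same fact already invoked in the proof of Theorem~\ref{thm:convexdom}). So the substantive task is showing that $PV$ is concave.

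For concavity, I would examine the second difference $\Delta^2(PV)(i) = \sum_j w_j V(j)$, where $w_j := P_{i+1,j} - 2P_{ij} + P_{i-1,j}$, and show it is $\leq 0$. Two scalar identities are free: $\sum_j w_j = 0$ (row sums of $P$ are $1$) and $\sum_j j\, w_j = \Delta^2 M(i) \leq 0$, the inequality from the concavity half of~\ref{item:TP3}, where $M(i) = \sum_j j P_{ij}$. The structural input comes from Karlin's variation-diminishing property of TP3 applied to the three consecutive rows $P_{i-1,\cdot}, P_{i,\cdot}, P_{i+1,\cdot}$ weighted by $(1,-2,1)$: this forces $j \mapsto w_j$ to have at most two sign changes, and if two are present, in the pattern $+,-,+$ (see~\citet{Kar68}). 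Hence there exist indices $a \leq b$ with $w_j \leq 0$ on $[a,b]$ and $w_j \geq 0$ outside, the degenerate subcases ($w \equiv 0$, a single sign change, or $a=b$) fitting the same framework by allowing an empty $+$-region or replacing the secant below by a tangent.

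Given this sign partition, I would introduce the affine secant $L(j) := V(a) + \beta(j-a)$ with slope $\beta := (V(b)-V(a))/(b-a)$ (or $\beta$ in the superdifferential of $V$ at $a$, when $a=b$). Since $V$ is increasing, $\beta \geq 0$; since $V$ is concave, $V(j) - L(j) \geq 0$ on $[a,b]$ and $V(j) - L(j) \leq 0$ outside, so the sign pattern of $V-L$ is the exact opposite of that of $w$. Splitting
\[
\sum_j w_j V(j) \;=\; \sum_j w_j L(j) \;+\; \sum_j w_j\,\bigl(V(j)-L(j)\bigr),
\]
the first sum collapses, via the two scalar identities, to $\beta \Delta^2 M(i) \leq 0$, and the second sum is a sum of pointwise products each of which is $\leq 0$ by construction. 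Combining, $\Delta^2(PV)(i) \leq 0$, i.e., $PV$ is concave.

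The main obstacle will be invoking the $+,-,+$ sign pattern of $w_j$ from TP3: this is the one place where TP3 (as opposed to mere TP2) is essential, and where the proof actually extends~\citet[pg 23]{Kar68}. Under Karlin's original hypothesis that $M(i)$ is \emph{linear}, the term $\beta \Delta^2 M(i)$ vanishes identically and the argument does not require $V$ to be increasing; in the present concave-$M$ setting the monotonicity of $V$ enters through $\beta \geq 0$, which is exactly what is needed to convert the concavity of $M$ into a non-positive contribution, thereby accounting for the ``increasing'' hypothesis in the lemma's statement.
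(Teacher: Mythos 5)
Your proof is correct, but it takes a genuinely different route from the paper's. The paper verifies concavity through the chord definition: for arbitrary $i_1<i_2$ it applies the variation-diminishing property in the column-to-row direction (to $j\mapsto V(j)-(aj+b)$, with sign pattern $-,+,-$ in $j$, yielding the same pattern in $i$ for $\sum_j P_{ij}V_j - \phi_i(a,b)$), then chooses $a,b$ so that the affine comparison agrees with $PV$ at $i_1,i_2$ and invokes concavity of $\phi_i(a,b)$ in $i$ to conclude $PV$ lies above its chords. You instead work locally with the second difference $\Delta^2(PV)(i)=\sum_j w_j V(j)$, $w_j=P_{i+1,j}-2P_{ij}+P_{i-1,j}$, and apply variation diminution in the transposed direction to the weights $(1,-2,1)$ on three consecutive rows (legitimate, since the transpose of a TP3 matrix is TP3), obtaining the $+,-,+$ pattern for $w$ and then splitting against a secant of $V$. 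The two arguments are dual applications of the same Karlin machinery, but yours has two concrete advantages: it isolates exactly where each hypothesis enters (monotonicity of $V$ gives $\beta\geq 0$, concavity of $M(i)=\sum_j jP_{ij}$ gives $\Delta^2 M(i)\leq 0$, and their product is the only non-collapsing term), and it sidesteps the paper's final interpolation step, where the chosen slope $a=\frac{\sum_j V_j(P_{i_2,j}-P_{i_1,j})}{\sum_j j(P_{i_2,j}-P_{i_1,j})}$ must be checked to be strictly positive (and the denominator nonzero) for the sign-change setup to apply --- an edge case the paper glosses over. Your closing observation that the term $\beta\,\Delta^2 M(i)$ vanishes when $M$ is linear, recovering Karlin's original result without monotonicity of $V$, correctly explains why the ``increasing'' hypothesis appears in this generalization. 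The only place to tighten is the treatment of the degenerate sign patterns of $w$ (one or zero sign changes): you assert these fit the same framework, which is true --- constant-sign nonzero $w$ is excluded by $\sum_j w_j=0$, and a single sign change lets you take the negative region to abut an endpoint --- but it deserves the explicit sentence.
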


\begin{proof}
First TP3 preserves monotonicity, so $\tp V$ is increasing. Next,
since $\valuef$ is concave and increasing, then for any $a> 0$ and $b \in \reals$,  $V(j) - (aj + b)$ has two or fewer sign changes in the order $-,+,-$ as $j$ increases from 1 to $\statedim$. Let $\fun_i (a,b)= \sum_j \tp_{ij} (aj+b)$. Since $\tp$ is TP3, the diminishing variation property of TP3 implies   $\sum_j \tp_{ij}\, V_j  - \fun_i(a,b)$ also has  two or fewer sign changes in the order $-,+,-$ as $i$ increases from 1 to $\statedim$.
  Assume  two sign changes occur; then for some $i_1 < i_2$,
  $\sum_j \tp_{ij}\, V_j  \geq  \fun_i(a,b)$ for $i_1 \leq i \leq i_2$. Since
  $  \fun_i(a,b)$ is integer  concave in $i$ by~\ref{item:TP3},  it lies above the line segment $L_i$ that connects $(i_1,\fun_{i_1})$ to $(i_2,\fun_{i_2})$.
  So
%\begin{equation} 
 $  \sum_j \tp_{ij}\, V_j  \geq \fun_i (a,b)\geq L_i $, $i_1 \leq i \leq i_2$
 %  \label{eq:chenproof}
 %\end{equation}
 Finally, for arbitrary  $i_1<i_2 \in \{1,\ldots,\statedim\}$, we can choose $a = \frac{\sum_j V_j(\tp_{i_2,j} - \tp_{i_1,j})}{\sum_j j (\tp_{i_2,j } - \tp_{i_1,j}) }$ and $b = \sum_j
\tp_{i_1,j}V_j- a \sum_j j\,\tp_{i_1,j}$ so that  $  \sum_j \tp_{ij}\, V_j =  \fun_i (a,b) =  L_i $ at $i=i_1, i_2$. 
Clearly,   $  \sum_j \tp_{ij}\, V_j \geq L_i$ for arbitrary $i_1 \leq i \leq i_2$ and
$  \sum_j \tp_{ij}\, V_j  = L_i$ for $i=i_1,i_2$ implies
$\sum_j \tp_{ij}\, V(j) $ is concave.
\end{proof}

\subsubsection*{Example (i). Bi-diagonal Transition  Matrices and Non-supermodular Costs}

Theorem~\ref{thm:convexdom} applies to bi-diagonal transition matrices with possibly non-supermodular costs; this is in contrast to Sec.~\ref{sec:discbi} where we considered perturbed bi-diagonal matrices.  
Consider an MDP with bi-diagonal transition matrices
$    \tp_{i.i}(a) = 1 - \param{a},  \; \tp_{1,i+1} = \param{a}, \; \tp_{\statedim,\statedim}(\action) = 1, \quad a \in \{1,\ldots,\actiondim\}$.
 Then
$\sum_j j \tp_{ij}(\action) = i + \param{a}$ for $i< \statedim$
and $\statedim$ for $i = \statedim$; so $\sum_j j \tp_{ij}(\action)$ is increasing and concave in $i$ (\ref{item:TP3} holds).
Assume  $\param{\action} \downarrow \action$. Then~\ref{item:tpsuperconv} is equivalent to 
 $\sum_{\l \leq m} \sum_{j \leq l} \tp_{\bstate,j}(a+1) - \tp_{\bstate,j}(a) \leq \al{\pair}
 (\sum_{\l \leq m} \sum_{j \leq l} \tp_{\state,j}(a+1) - \tp_{\state,j}(a)) $.
Since $\param{a} \geq \param{a+1}$, it follows that ~\ref{item:tpsuperconv} holds for all $\al{\pair} \geq 1$. If~\ref{convexcost}, \ref{item:submodc} hold for some
$\be{\pair} > 1$, then  Theorem~\ref{thm:convexdom} holds.

{\em Remarks}. (i) For $\actiondim=2$, and concave increasing costs $\cost(x,1), \cost(x,2)$, the following useful single crossing characterization   satisfies~\ref{item:submodc}.
Suppose $\cost(x=1,2) < \cost(x=1,1)$ and the curves $\cost(x,2)$ and $\cost(x,1)$ intersect once  at $x^*$. For $x \geq x^*$, the curve $c(x,2)$ grows faster than $ c(x,1)$, i.e., $\cost(x,a)$ is supermodular for $x \geq x^*$.
For $x< x^*$, the difference between $\cost(x,1)$ and $\cost(x,2)$ can be arbitrary. Figure~\ref{fig:concavexmdp}(i) illustrates this.
\\
(ii) To motivate Theorem~\ref{thm:convexdom},  \ref{tp_supermod} does not hold for bi-diagonal matrices. Since $\be > 1$,  supermodularity~\ref{supermod_cost} does not hold.   Also,  Theorem~\ref{thm:IDmonotone} does not apply since~\ref{tp_id} does not hold.

{\em Numerical example}. Consider a discounted cost MDP with $\actiondim=2$, $\statedim = 50$,  $\param{1} = 0.8$, $\param{2}=0.7$, %in~\eqref{eq:bidiagexampleconvex},
$\discount = 0.95$,
$\horizon = 200$,
$ \cost(x,1) = \tht_1 x^2 + \tht_2 x + \tht_3, \quad 
\cost(x,2) = \tht_4 \big( 1- \exp(\tht_5 x + \tht_6)\big)  $,
$\tht = [-0.01, 1, 8.8, 25, -0.1, -0.4]$.
% $\cost(x,1) = (x+4) - 0.01 x^2 + 4.8$,  $\cost(x,2) =   25(1 - \exp(-5(x+4)/\statedim))$.
It can be verified that the cost is not supermodular, but the conditions of Theorem~\ref{thm:convexdom} are satisfied. So the  value function is concave and  optimal policy is decreasing. Figure~\ref{fig:concavexmdp}(ii)  shows  $\Q_\horizon(\state,\action)$ is not submodular.

\begin{figure} \centering
   \includegraphics[scale=0.4]{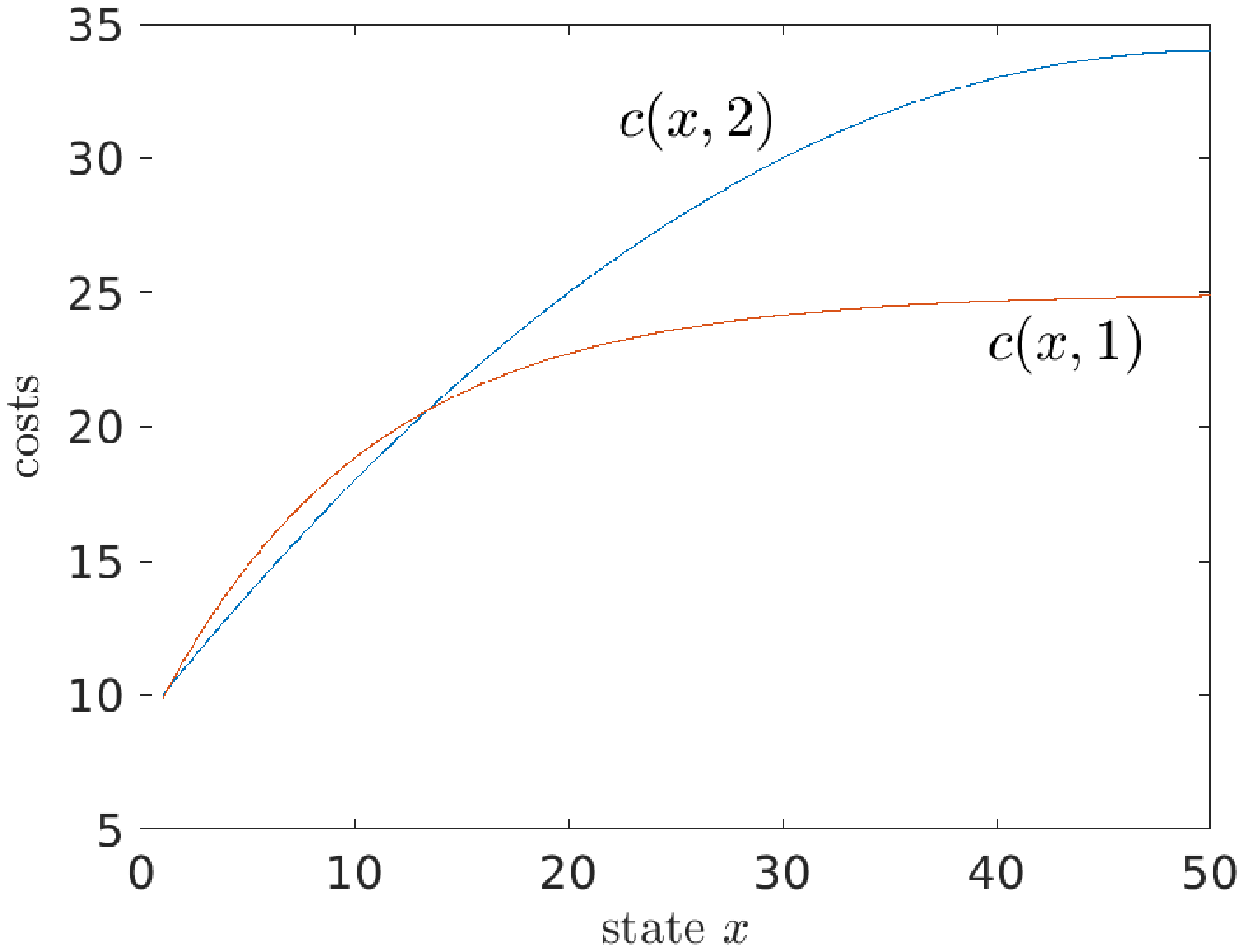}  \hspace{0.3cm}
  \includegraphics[scale=0.4]{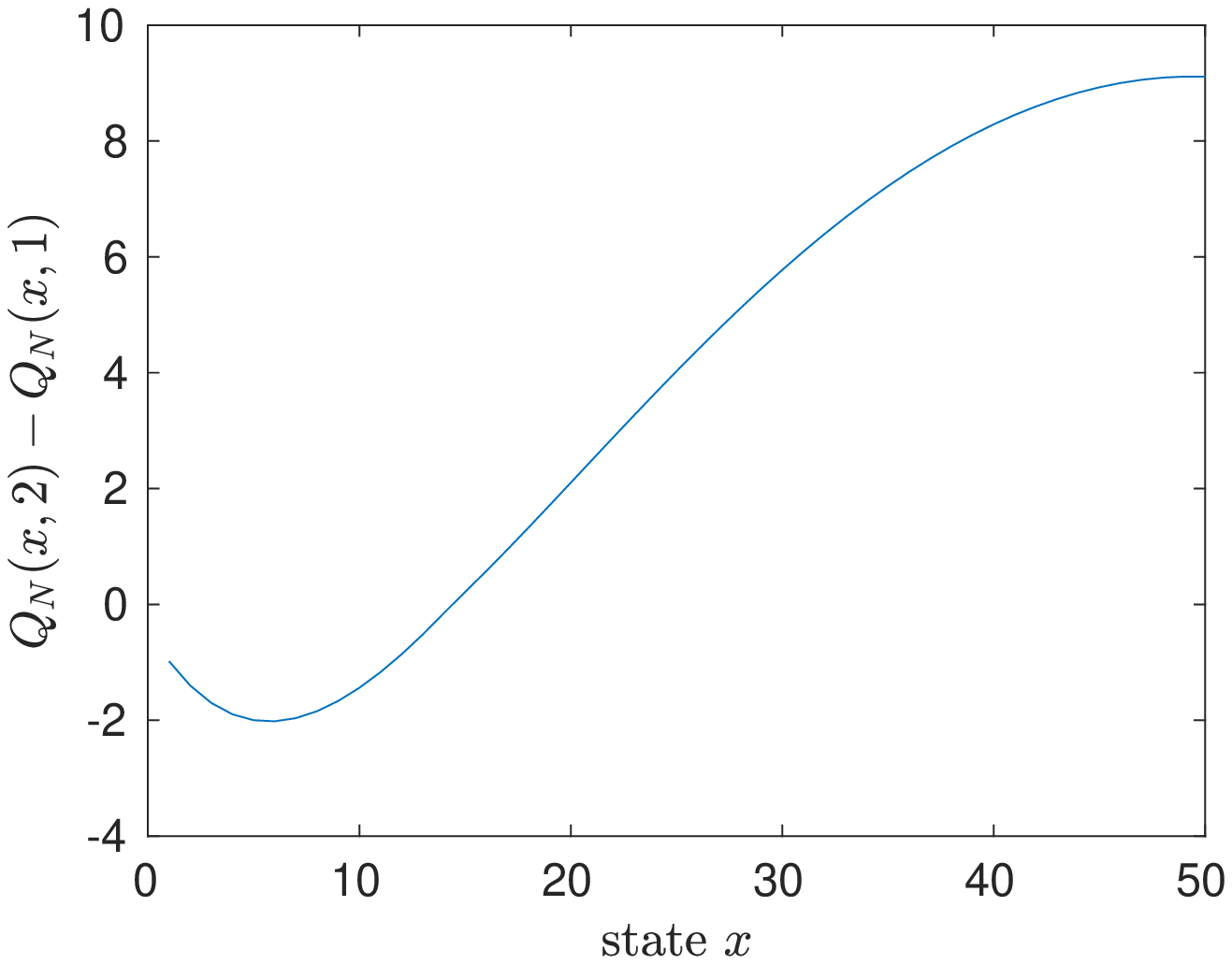} 
  \caption{Concave costs and non-submodular $Q$ function  for MDP with bi-diagonal transition matrix that satisfies Theorem~\ref{thm:convexdom}.}
  \label{fig:concavexmdp}
\end{figure}

\subsubsection*{Example (ii). Tri-diagonal Transition  Matrices and Non-submodular Costs}
 Corollary~\ref{cor:convexdom} applies to  MDPs with tri-diagonal transition matrices where
$
  \tp_{i-1,i}(a) = p_a,\;\tp_{i+1,i} = q_a, \; \tp_{ii} = 1-p_a -q_a, \; \tp_{11}(a) = 1, \; \tp_{\statedim-1,\statedim}=1-s_a, \tp_{\statedim,\statedim}=s_a$.
If $\tp(\action)$ is TP3 and
 $q_a < p_a$, $s_a > 1 + q_a - p_a$ hold, then $\sum_i j \tp_{ij}(a) $ is increasing and convex in $i$; so  modified~\ref{item:TP3} holds.  Also, if $q_a \uparrow a$, $p_a \downarrow a$,  $q_{a+1}-q_a \geq p_{a+1} - p_a$,
$s_{a+1} - s_a > q_{a+1} - q_a+ p_a - p_{a+1}$,
then convex dominance (modified~\ref{item:tpsuperconv}) holds for all $\al \in (0,1]$.
 Then if the costs are chosen so that modified~\ref{convexcost}, and modified~\ref{item:submodc} hold for some
$\be{\pair} \leq 1$, then  Corollary~\ref{cor:convexdom} holds and the optimal policy is monotone (even though the costs are not  submodular when $\be < 1$).

{\em Numerical example}. Consider a discounted cost MDP with $\actiondim=2$, $\statedim = 35$,
tri-diagonal transition  matrices with  $p_1=0.2,p_2 = 0.1, q_1 = 0.05, q_2 =0.1, s_1 = 0.95, s_2 = 1$. Also
 $\discount = 0.95$,
 $\horizon = 200$,
$ \cost(x,1) = -(\tht_1 + \tht_2\,x^3), \quad \cost(x,2) = -(\tht_3 + \tht_4 x^3)$
 where $\tht=[15, 0.3/4^3, 1, 3/4^3]$.
 % $\cost(x,1) = -(15 + 0.3( x/4)^3)$, $\cost(x,2) = -(1 + 3(x/4)^2)$.
The cost $\cost(x,a)$ is not submodular (see Figure~\ref{fig:tridiag}(i)), but  Corollary~\ref{cor:convexdom} holds. % and so the optimal policy is increasing. 
Figure~\ref{fig:tridiag}(ii) shows the non-submodular $Q_N(x,a)$.

\begin{figure} \centering
   \includegraphics[scale=0.4]{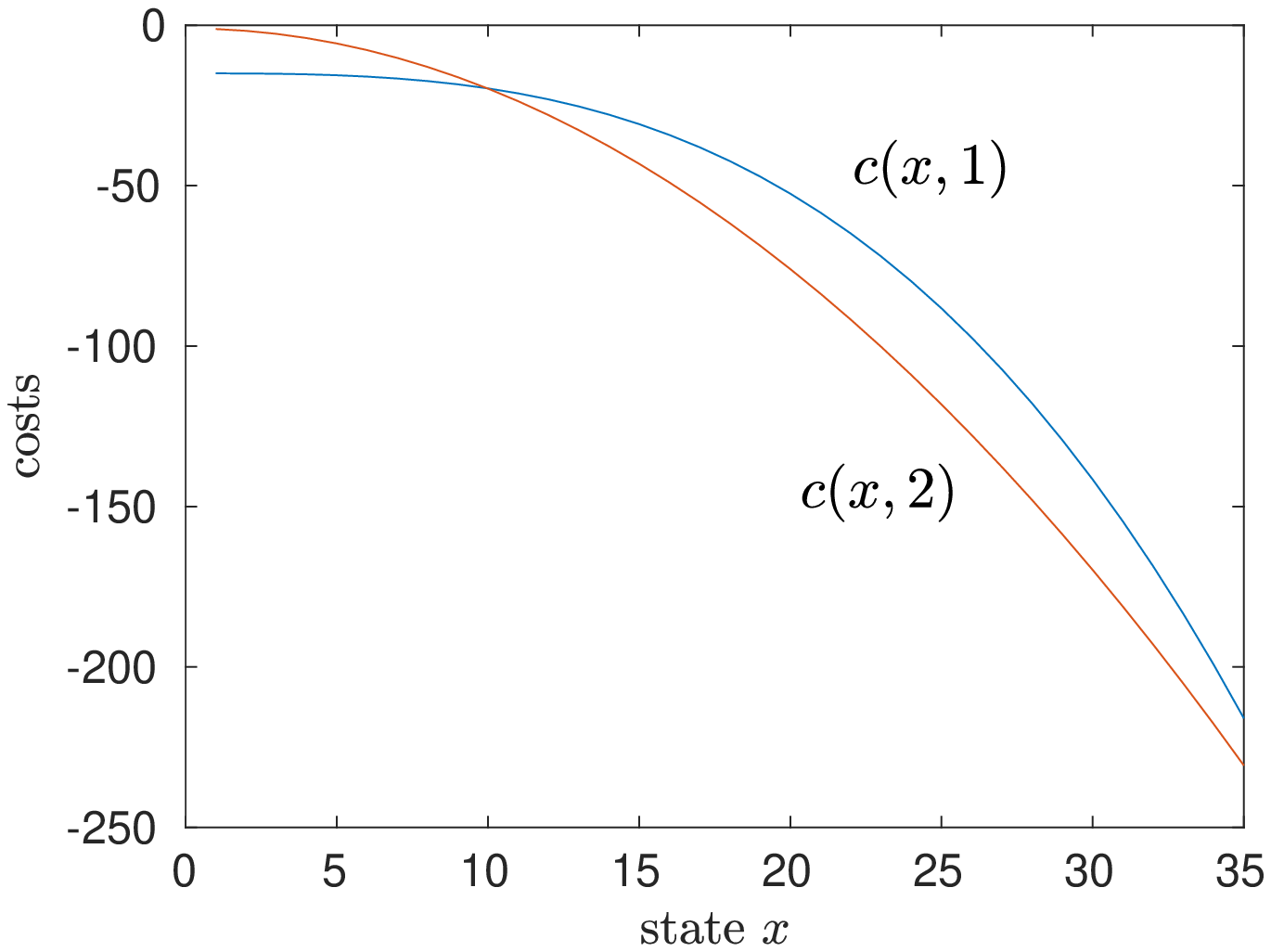}  \hspace{0.3cm}
  \includegraphics[scale=0.4]{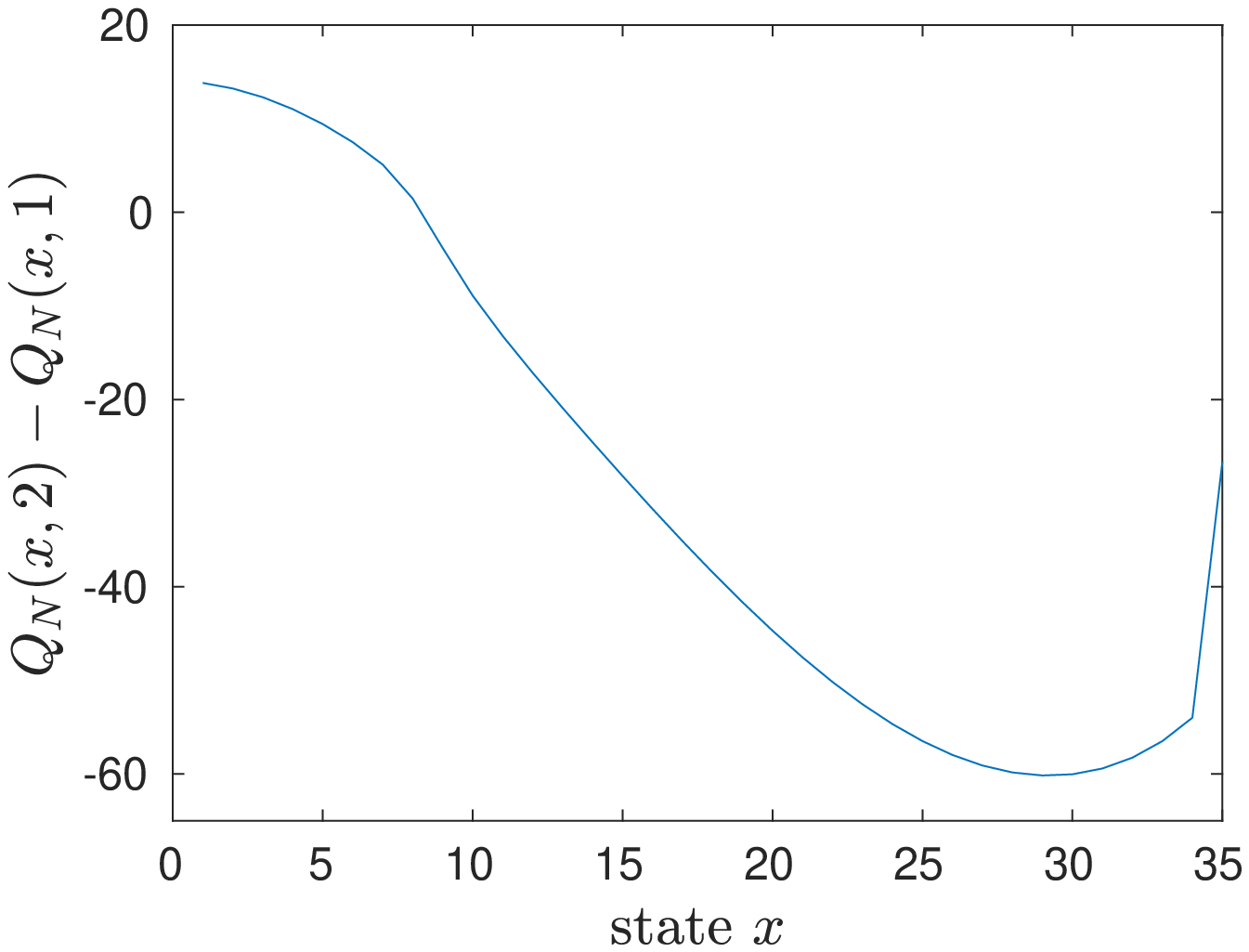} 
  \caption{Non-submodular concave costs  and non-submodular $Q$ function  for MDP with tri-diagonal transition matrix that satisfies Corollary~\ref{cor:convexdom}.}
  \label{fig:tridiag}
\end{figure}

\section{Summary and Discussion} \label{sec:conclusions}

{\em Summary.} The textbook structural result for MDPs uses supermodularity to establish the existence of monotone optimal policies. This paper shows how supermodularity  can be relaxed by formulating a sufficient condition for interval dominance, which we call the \tID condition.   
We presented several examples of MDPs which satisfy \tID including
sigmoidal costs, 
and bi-diagonal/perturbed bi-diagonal transition matrices.    
The structural results in  Sec.\ref{sec:mdp}, namely,
Theorem~\ref{thm:IDmonotone}, Corollaries~\ref{cor:ex0}, \ref{cor:ex2}, \ref{cor:bi1} and Theorem~\ref{thm:ross}
used first order stochastic dominance to establish \tID for several examples. of MDPs. In comparison, Theorem~\ref{thm:convexdom} in Sec.\ref{subsec:concave} discussed examples of \tID in MDPs with concave value functions; we used TP3 assumptions and second order (convex) stochastic dominance to prove the existence of monotone optimal policies.

{\em  Discussion 1. Reinforcement Learning (RL) and Differential sparse Policies:} Once the existence of a monotone optimal policy has been established,  RL  algorithms that exploit this structure can be constructed.
Q-learning algorithms that exploit the \tID condition can be obtained by generalizing the supermodular Q-learning algorithms in~\citet{Kri16}. The second approach is to develop policy search RL algorithms.
In particular, when  $\actiondim$ is small and $\statedim$ is large, then since $\policy^*(\state) \increasing \state$, it is differentially sparse, that is $\policy^*(\state+1) - \policy^*(\state)$ is positive  only at $\actiondim-1$  values of $\state$, and zero for all other $\state$. In~\citet{MRK17}, LASSO based methods are developed to exploit this sparsity and   significantly accelerate search for $\policy^*(\state)$; they build on the nearly-isotonic regression techniques in~\citet{THT11}. The idea is to add a rectified $l_1$-penalty  $\sum_{\state=1}^{\statedim-1} | \policy^l(x) - \policy^l(x+1)|_+$ to the cost in the optimization problem (here $\policy^l$ is the  estimate of the optimal policy at iteration $l$ of the optimization algorithm). Intuitively, this  modifies the cost surface to be more steep in the direction of monotone policies  resulting in faster convergence of an iterative optimization algorithm.

{\em Discussion 2. Convex Value Functions:} Can Theorem~\ref{thm:convexdom} be extended to MDPs with convex value functions? Since convexity is not preserved by minimization,  we  need   multimodularity assumptions  to show  the value function is convex.
However, since multimodularity implies  supermodularity, we are unable to exploit the weaker \tID condition. Multimodularity is sufficient (but not necessary) for convexity to be preserved by minimization; so it is worthwhile exploring relaxed \tID based versions that do not require supermodularity.

\bibliographystyle{abbrvnat}
\bibliography{$HOME/texstuff/styles/bib/vkm}

\begin{appendices}

\section{Toy Example Illustrating \tID Rewards}
Sec.~\ref{sec:ex1} discussed an intuitive visualization of \tID property of rewards for a MDP in terms of the reward curves vs state.
We now provide  a  second intuitive visualization displayed in Figure \ref{fig:ex1b}  in terms of the reward curves vs actions $\action$. This is similar to the discussion in~\citet{QS09}. For  supermodular rewards, $r(3,a)- r(2,a)$  increases with  $a$. Examining   Figure~\ref{fig:ex1b},    for $ a\leq a_x$ (where
  $a_x = \argmax_a r(x,a)=2$), the differences between the reward curves can be arbitrary, as long as  $r(3,a) \geq r(2,a)$ as required by~\ref{cost}.
  Also the $r(3,a)$ curve satisfies that $r(3,1) > \reward(3,3)$, while the $r(2,a)$ curve satisfies $r(2,1) < r(2,3)$; so the single crossing property is violated. Yet condition \tID, namely, ~\ref{cost_id} holds. 

 The following MDP   satisfies Theorem~\ref{thm:IDmonotone}:  $\discount = 0.9$, $\statedim=4$, $\actiondim=3$, $ \horizon=100$,
\begin{multline*} 
  \reward= \begin{bmatrix}
    12 & 4 & 0 \\ 16 & 22 & 18 \\ 22 & 23 & 20 \\ 24 & 28 & 30
  \end{bmatrix}, 
  \tp(1) = \begin{bmatrix}
    0.3 & 0.4 & 0.2 & 0.1 \\  0.2 & 0.4 & 0.2 & 0.2\\  0.2 & 0.4 & 0.1 & 0.3\\  0.2 & 0.3 & 0.1 & 0.4
  \end{bmatrix},    \tp(2) = \begin{bmatrix}
    0.3 & 0.3 & 0.2 & 0.2\\  0.2 & 0.3 & 0.2 & 0.3\\  0.2 & 0.3 & 0.1 & 0.4\\  0.2& 0.2 &0.1 &0.5
  \end{bmatrix},
  \tp(3) =
\begin{bmatrix} 
  0.3 & 0.3 & 0.1 & 0.3\\  0.2 & 0.3 & 0.1 & 0.4\\  0.2 & 0.2 & 0.1 & 0.5\\  0.1 &  0.2 & 0.1 & 0.6
  \end{bmatrix}
  \end{multline*}

  Note that $\reward(2,3) - \reward(2,1) > 0$ while $\reward(3,3) - \reward(3,1) < 0$. Jointly, these  violate both supermodularity and also single crossing.     But the \tID condition~\eqref{eq:id} holds since it  compares action 1 with action 2, and action 2 with action 3. Indeed, \ref{cost_id} holds for
  $\be{\action=1} \leq 1/6$ and $\be{\action=2}= 1$.
Also~\ref{tp_id} holds for all $\al \leq 1$.
%  $\be{\pair}\leq 1$.
Figure~\ref{exii:sigmoidreward} plots $Q_{100}(\state,2) - Q_{100}(\state,1)$ and
$Q_{100}(\state,3) - Q_{100}(\state,1)$. Clearly $Q_k(\state,\action)$ is 
  not supermodular. But the optimal policy is monotone by  Theorem~\ref{thm:IDmonotone}.

  \begin{figure}[h]
    \begin{subfigure}{0.45\linewidth}
  \includegraphics[scale=0.45]{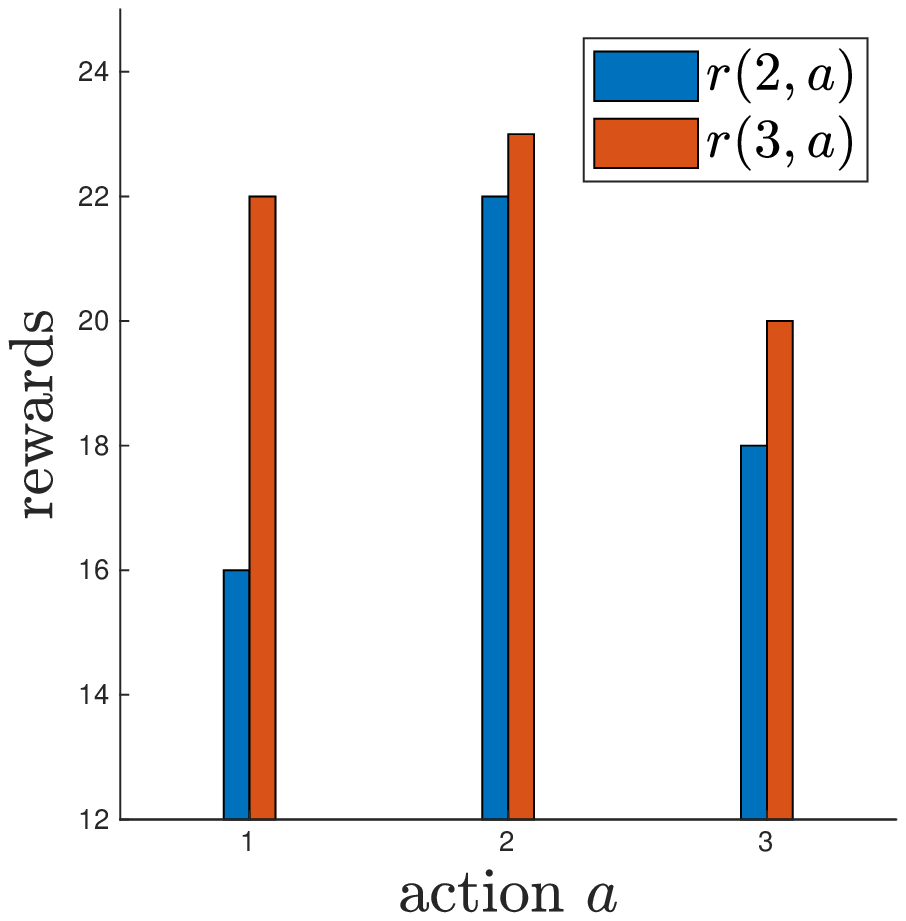}
   \caption{ Non-single crossing  reward} \label{fig:ex1b}
   \end{subfigure}
 \begin{subfigure}{0.45\linewidth}
   \includegraphics[scale=0.4]{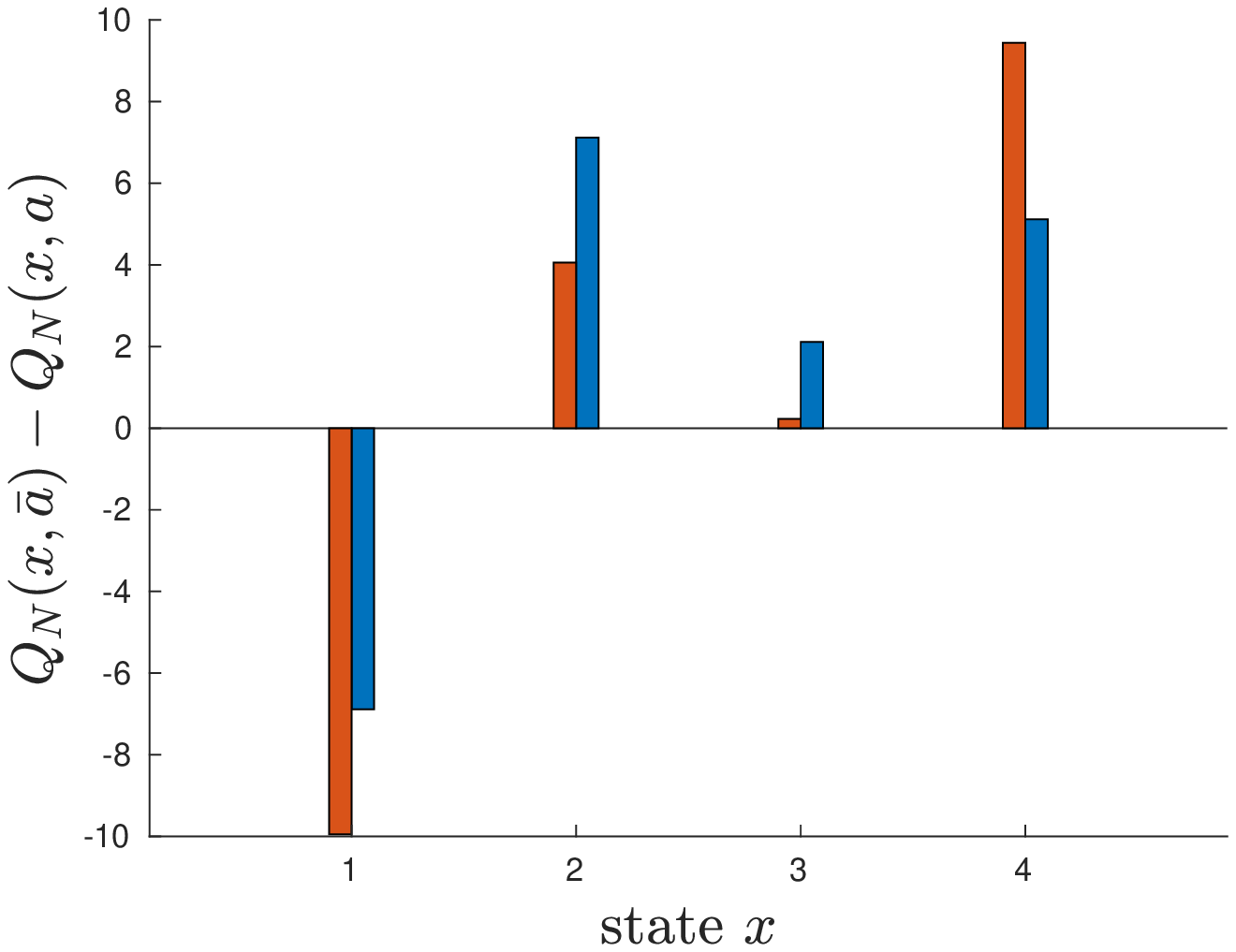}
   \caption{Non-single crossing $Q$ function}
   \label{exii:sigmoidreward}
 \end{subfigure}
    \caption{MDP with  $\statedim=4,\actiondim=3$. If supermodularity holds then  the bars would be increasing with $\state$. Yet \tID holds by Corollary~\ref{cor:ex0} and the optimal policy is monotone.}
  \label{fig:toy}
\end{figure}

\section{Optimal Allocation MDP with Penalty Cost} \label{sec:ross}

This section discusses a finite horizon penalty-cost MDP with perturbed bi-diagonal transition matrices~\eqref{eq:bidiagonale}.   This has applications in  optimal allocation  problems  with penalty costs \citep{Ros83,DLR76} and  wireless transmission control~\citep{NK10}.
We assume $\saram < \param{\action+1}-  \param{\action}$; so as discussed in Sec.~\ref{sec:discbi} supermodularity condition \ref{tp_supermod} does not hold. 
%The  formulation is similar to  Example 4.2 in \cite[pg.8]{Ros83} and \cite{DLR76}, but we choose the  cost to be a possibly non-submodular  function of state and action.

\begin{comment}
To give some context, Example 4.2 in \cite[pg.8]{Ros83} and \cite{DLR76} consider the case
$\saram = 0$  in~\eqref{eq:bidiagonale}, i.e., the transition matrices are bi-diagonal. 
The   supermodularity condition \ref{tp_supermod}  on  the transition matrices does not hold when $\saram = 0$
apart from the trivial case where $\param{\action} = \param{\action+1}$.  \cite{Ros83,DLR76}   bypass  \ref{tp_supermod} by using  an ingenious proof: they assume  the terminal cost is increasing and integer convex. Then they show by induction  that the  value function is integer convex in $\state$. This  establishes
submodularity of the  $\Q$-function. 
\end{comment}

 %Below we  obtain a new proof of this classical result by verifying that the conditions of   Theorem~\ref{thm:IDmonotone}  hold. Our proof does not require showing that the value function is convex.  

% \subsubsection*{Penalty Cost MDP Model}

As in  Example 4.2 in \citet[pg.8]{Ros83} and \citet{DLR76}, we consider an $\horizon$-horizon MDP model. There are $\horizon$-stages to construct $\statedim$ components sequentially.
If effort $\cost(\state,\action)$ is allocated then the component is constructed with successfully with probability $\param{a}$.  Our transition matrices are
specified by the perturbed bi-diagonal matrices~\eqref{eq:bidiagonale}.
 At the end of $\horizon$ stages, the penalty cost incurred  is
$\terminal{i}$ if we are $i$ components short, where $i = \{1,\ldots,\statedim\}$, with $\terminal{1}=0$.
\citet{Ros83} considers a continuous action space as  the closed interval $\actionspace=[0,\actiondim]$,  $\cost(x,a) = a$ where $a\in \actionspace$ and bi-diagonal matrices ($\saram=0$).
Although the  \tID condition yields degenerate policies for $\cost(x,a) = a$,
it applies to non-supermodular  cost structures with perturbed bi-diagonal matrices. Such cases  cannot be handled by the convexity based supermodularity approach.

We
consider  the 
discrete  action space $\actionspace = \{1,\ldots,\actiondim\}$
corresponding to  discretization of the continuous valued   actions:
$ \cactionspace = \{0, \; \saram, \; 2\,\saram,\ldots, (\actiondim-1)\,\saram \}$.
Recall $\saram$ are perturbation probabilities of  the bi-diagonal transition matrices in~\eqref{eq:bidiagonale}.
The costs and transition probability parameter $\param{\action}$ in terms of the discretized actions are
\begin{equation}
  \text{ Costs: }  \cost(\state,\action)\,\saram, \quad
  \param{a+1} - \param{a} = \saram\, \dparam\; \; \text{ where } 
  \dparam > 0 .
      \label{eq:MDProssparam}
\end{equation}

  We  make the following assumptions; they are discussed after Theorem~\ref{thm:ross} below.
\begin{enumerate}[label=(A{\arabic*})]
  \setcounter{enumi}{\value{assum_index}}
\item\label{item:monotoneparam} $\dparam\geq  1 $  and    $\uparrow  \action$. (The
  $\uparrow a$ can be relaxed, see remark below.)
\item\label{item:inc}
  Terminal cost  $\terminalc{\state} $ convex and  $\uparrow \state$ with $\terminal{1} = 0$. Cost $\cost(\state,\action) \downarrow \state$.
(More generally,  $\bcost(\state,\action)$  in~\eqref{eq:modifiedval} $\downarrow \state$.)
\end{enumerate}

%\subsubsection*{Main Result}

\noindent {\bf Main Result}.
We will work with the modified value function  $\W_k(\state) = \values_k(\state) - \terminal{\state}$. This is convenient since the  terminal condition  is  $\W_\horizon(i) = 0$ for all $i$.  The dynamic programming recursion~\eqref{eq:finitedp} expressed in terms of  $\W_k(\state)$ and minimizing the cumulative cost  (rather than maximizing the cumulative  reward)  is   
%with terminal condition   $\W_\horizon(i) = 0$ as
      \begin{align}
\policy^*_{k}(\state) &=   \argmin_\action  \Qw_k(\state,\action) , \quad
    \W_k(\state)= \min_\action  \Qw_k(\state,\action) , \quad k=0,\ldots,\horizon-1 \nn\\
    \Qw_k(i,\action) &=   \bcost(i,\action) + \big(1-\param{a}-\saram(\actiondim-\action)\big)\,\W_{k+1}(i) + \param{a} \W_{k+1}(i-1) \nn \\ \text{ where } & \quad
    \bcost(i,\action)=  \saram\, \cost (i,\action) + \param{\action} \,(\terminalc{i-1} - \terminalc{i}) +
    \saram\,(\actiondim-\action)\,(\terminal{\statedim}-\terminalc{i}), \quad i =1,\ldots,\statedim-1  \label{eq:modifiedval} \\
    &   \Qw_k(\statedim,\action) = \bcost(\statedim,\action) + \param{a} \W_{k+1}(\statedim-1) + (1 - \param{a})\, \W_{k+1}(\statedim), \;
    \bcost(\statedim,\action) = \saram\, \cost(\statedim,a) + \param{a}
  (\terminal{\statedim-1} - \terminal{\statedim}) \nn
  \end{align}
%Note that the optimal policy remains the same.

%The following is the main result.  

\begin{theorem} \label{thm:ross}
  Consider the $\horizon$-horizon MDP with costs and transition probabilities specified by~\eqref{eq:MDProssparam}, \eqref{eq:bidiagonale}.
  Assume~\ref{item:monotoneparam} and  \ref{item:inc}.
  Suppose $\min_a \dparam> 1$  and the
  costs  satisfy
\begin{equation} \label{eq:maincost}
   \terminal{i+1} \geq \terminal{\statedim} + \frac{\dparam^2\, (\terminal{i} - \terminal{i-1})}{\dparam-1} +
\frac{ \diffcost(i+1,a) - \dparam \diffcost(i,a)}{\dparam-1}
, \quad i = 2,\ldots,\statedim-1
\end{equation}
where  $\diffcost(i,a) = \cost(i,a+1) - \cost(i,a)$ and
 perturbation probabilities  $\saram \in \big(0, \min_a (\param{a+1} - \param{a})\big)$.
   Then  optimal policy $\policy^*_k(i)$,  $k=1,\ldots,\horizon-1$, is increasing in state $i$.
  \end{theorem}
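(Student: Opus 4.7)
The strategy is to adapt the interval-dominance machinery of Theorem~\ref{thm:IDmonotone} to the cost-minimization DP~\eqref{eq:modifiedval}. The natural ID scaling is $\al{\action} = \be{\action} = \dparam$, matching the normalized bi-diagonal increment $(\param{\action+1} - \param{\action})/\saram = \dparam$ from~\eqref{eq:MDProssparam}. The minimization analog of~\eqref{eq:id} that implies $\argmin_\action \Qw_k(\state, \action) \uparrow \state$ is the reversed-sense inequality
\begin{equation*}
  \Qw_k(\bar\state, \action+1) - \Qw_k(\bar\state, \action) \,\leq\, \dparam\, \bigl[\Qw_k(\state, \action+1) - \Qw_k(\state, \action)\bigr], \qquad \bar\state > \state,
\end{equation*}
so the whole task reduces to verifying this relation for $\Qw_k$ at every stage~$k$.

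First I would establish by backward induction on $k$ that $\W_k(\state)$ is increasing in $\state$. The base case $\W_\horizon \equiv 0$ is immediate. For the inductive step, expand the forward state-difference $\Qw_k(\state+1, \action) - \Qw_k(\state, \action)$ into a cost piece $\bcost(\state+1, \action) - \bcost(\state, \action)$ and the transition-weighted gains $\param{\action}[\W_{k+1}(\state) - \W_{k+1}(\state-1)]$ and $(1 - \param{\action} - \saram(\actiondim - \action))[\W_{k+1}(\state+1) - \W_{k+1}(\state)]$, both non-negative by the inductive hypothesis. Convexity of $\terminalc$ from~\ref{item:inc} feeds into the transition-weighted terms of $\bcost$ and is exactly what forces the transition gain to dominate any drop in $\bcost$ along $\state$; minimization over $\action$ then preserves the resulting monotonicity.

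The key computational step is the explicit formula
\begin{equation*}
  \Qw_k(i, \action+1) - \Qw_k(i, \action) \,=\, \saram\bigl[\diffcost(i, \action) - \dparam\,(\valuef_{k+1}(i) - \valuef_{k+1}(i-1)) - (\valuef_{k+1}(\statedim) - \valuef_{k+1}(i))\bigr],
\end{equation*}
obtained by expanding~\eqref{eq:modifiedval}, using $\param{\action+1} - \param{\action} = \saram\dparam$, and reverting to the original cost-to-go $\valuef_{k+1} = \W_{k+1} + \terminalc$. Substituting this formula into the target ID inequality for adjacent states $i, i+1$ with scaling $\dparam$ and rearranging reduces the requirement to
\begin{equation*}
  \valuef_{k+1}(i+1) \,\geq\, \valuef_{k+1}(\statedim) + \frac{\dparam^2\,\bigl[\valuef_{k+1}(i) - \valuef_{k+1}(i-1)\bigr]}{\dparam-1} + \frac{\diffcost(i+1, \action) - \dparam\,\diffcost(i, \action)}{\dparam-1},
\end{equation*}
which is precisely~\eqref{eq:maincost} with the terminal cost $\terminalc$ replaced by $\valuef_{k+1}$. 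The $\dparam^2/(\dparam-1)$ and $1/(\dparam-1)$ coefficients arise because the scaling $\dparam$ is compounded once through the bi-diagonal weight $\param{\action}$ attached to the terminal-increment piece of $\bcost$ and once through the ID inequality itself, and $\dparam - 1$ appears as the denominator after isolating $\valuef_{k+1}(i+1)$.

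At the terminal stage $\valuef_\horizon = \terminalc$ this requirement is exactly the hypothesis~\eqref{eq:maincost}. The main obstacle is to show that the same inequality propagates back to $\valuef_k$ for all $k$: I would handle this by a second layer of backward induction, using the bi-diagonal transition structure~\eqref{eq:bidiagonale}, the monotonicity of $\W_k$ from the first step, and the strict inequalities $\min_\action \dparam > 1$ and $\saram < \min_\action(\param{\action+1} - \param{\action})$ to control the two divisions by $\dparam - 1$ and the perturbation term. Once propagation is in hand, the ID inequality for $\Qw_k$ holds for every $k$, and the minimization analog of~\eqref{eq:argmax} yields $\optpolicy_k(\state) \uparrow \state$. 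Extending from adjacent state pairs to arbitrary $\bar\state > \state$ is routine by iteration, producing a pair-dependent but still positive and increasing-in-$\action$ ID scaling of the form permitted by the general~\eqref{eq:id}.
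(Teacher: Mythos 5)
Your reduction of the adjacent-state ID inequality for $\Qw_k$ to the displayed condition on $\valuef_{k+1}$ is algebraically correct, but it leaves the theorem unproved: the condition you arrive at is \eqref{eq:maincost} with $\terminalc$ replaced by $\valuef_{k+1}$, which coincides with the hypothesis only at $k=\horizon-1$ (where $\valuef_\horizon=\terminalc$). You then say you ``would handle'' the propagation to earlier stages by a second backward induction, but you never exhibit it, and it is not routine: $\valuef_{k}$ is not convex in general, the required inequality couples $\valuef_{k}(\statedim),\valuef_{k}(i+1),\valuef_{k}(i),\valuef_{k}(i-1)$ with the coefficient $\dparam^2/(\dparam-1)$, and there is no obvious invariant that the DP recursion preserves. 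This is precisely the difficulty the paper's proof is designed to avoid. The paper never imposes an \eqref{eq:maincost}-type condition on the evolving value function; instead it works with $\W_k=\values_k-\terminalc$ and splits $\Qw_k$ into the \emph{transformed stage cost} $\bcost(i,\action)$ (into which all the $\terminalc$ differences are folded) plus the transition-weighted $\W_{k+1}$ terms. Condition \ref{cost_id} for $\bcost$ with $\be{\action}=\dparam$ is then verified once, on the model data, and is exactly \eqref{eq:maincost}; the $\W_{k+1}$ part needs only \ref{tp_id} (which holds for $\al{\action}\geq\dparam$) together with monotonicity of $\W_{k+1}$, via \eqref{eq:step1} and \ref{sum} in Theorem~\ref{thm:IDmonotone}. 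In effect, the separation $\valuef_{k+1}=\W_{k+1}+\terminalc$ is the missing idea that makes your propagation step go through; without it your argument stalls at $k=\horizon-1$.

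A secondary issue: you assert $\W_k(\state)$ is \emph{increasing} in $\state$ and that convexity of $\terminalc$ makes the ``transition gain dominate any drop in $\bcost$.'' In the paper's setup \ref{item:inc} (convex increasing $\terminalc$, decreasing $\cost$) makes $\bcost(\cdot,\action)$ decreasing, so the cost-to-go $\W_k$ is \emph{decreasing} in $\state$ (equivalently, the reward-form \ref{cost} holds for $-\bcost$); there is no competition between a transition gain and a cost drop to resolve. The direction matters, because it determines which way the stochastic-dominance inequality \eqref{eq:step1} points when you apply \ref{tp_id}, and hence whether you obtain the $\leq$ form of the ID inequality needed for $\argmin_\action\Qw_k(\state,\action)\uparrow\state$. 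Your final remark about chaining adjacent-state inequalities to arbitrary $\bstate>\state$ is fine.
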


%  \begin{corollary} \label{cor:ross} Consider the $\horizon$-horizon MDP with costs and transition probabilities specified by~\eqref{eq:4}, \eqref{eq:bidiagonale}.
%    Assume~\ref{item:monotoneparam} and  \ref{item:inc} hold.
%    Then as $\saram\rightarrow 0$,
% the optimal policy $\policy^*_k(i)$ is increasing in the state $i \in \{1,\ldots,\statedim-1\}$.
% \end{corollary}

 \noindent  {\bf Remarks}. 
%  {\bf   1.} The conditions of Theorem~\ref{thm:ross} are independent of perturbation probability   $\saram$.  So Theorem~\ref{thm:ross} holds for all $\saram \in \big(0, \min_a (\param{a+1} - \param{a})\big)$.\\
  {\bf 1.}
Theorem~\ref{thm:ross} can be  viewed as complementary result  to the structural result in  \citet{Ros83,DLR76}. On the one hand,
if we choose the same instantaneous cost as~\citet{Ros83}, namely $\cost(\state,\action) = \fv\, a$ for some constant $\fv$,
  then~\eqref{eq:maincost} becomes
$ \terminal{i+1} \geq \terminal{\statedim} + \frac{\dparam^2\, (\terminal{i} - \terminal{i-1})}{\dparam-1} - \fv
$.
But terminal costs satisfying this condition yield monotone policies that are degenerate, namely, $\policy^*_k(i) = 1$ for all~$i$.
So for  $\cost(\state,\action) = \fv\, a$, the \tID condition does not yield a useful result. It is necessary to exploit convexity of the value function, as in \citet{Ros83},  to obtain  non-degenerate  optimal policies.

On the other hand, the \tID condition \eqref{eq:maincost} allows for non-submodular
 costs  and yields  monotone policies (see examples below).
For such cases,  it is not clear  how to extend the convexity based submodularity proof in~ \citet{Ros83} (which applies when $\saram= 0$) to the MDP~\eqref{eq:bidiagonale} for arbitrary  $\saram> 0$.
\\
{\bf 2.} Regarding the assumptions,
\ref{item:monotoneparam} is equivalent to  $\param{\action} \uparrow a$ and convex.  
\ref{item:monotoneparam} can be relaxed to $\param{\action} \uparrow \action$ by imposing stronger conditions on~\eqref{eq:maincost}, see~\eqref{eq:stronger} below.
%This weaker assumption coincides with the 
%assumption in \citet{Ros83}.
%\ref{item:inc} is the same condition as in \citet{Ros83,DLR76}; they use it to establish  convexity of the value function. Our use of~\ref{item:inc} is different:
%Inspecting the transformed costs $\bcost(i,a)$ in~\eqref{eq:modifiedval}, we see
The  convexity~\ref{item:inc} of terminal costs  implies  $\bcost(i,a)$ in~\eqref{eq:modifiedval} is  decreasing. Recall  decreasing costs~\ref{cost} is used to show submodularity (and Theorem~\ref{thm:IDmonotone}).
\\
{\bf 3.}  Theorem~\ref{thm:ross} considers costs (negative of rewards) whereas Theorem~\ref{thm:IDmonotone} considers rewards. Note \ref{cost} is equivalent to the   cost decreasing in the state. Also   inequality~\ref{cost_id} is reversed   in terms of costs.

\begin{figure}[t]
  \centering
  \begin{subfigure}{0.45\linewidth}
    \includegraphics[scale=0.4]{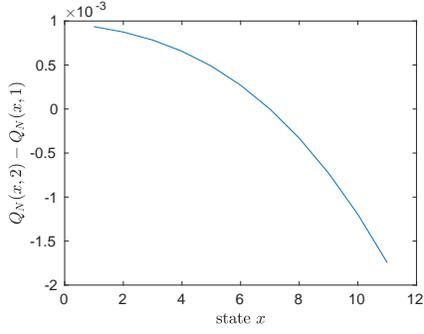}
    \caption{$Q$ function for submodular cost}
  \end{subfigure} \hspace{1cm}
  \begin{subfigure}{0.45\linewidth}
    \includegraphics[scale=0.4]{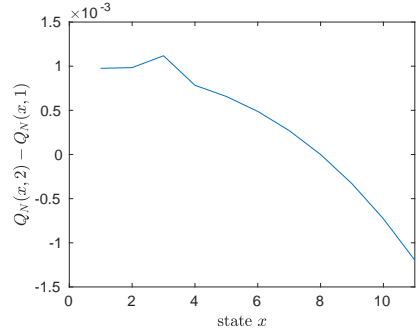}
    \caption{$Q$ function for non-submodular cost}
  \end{subfigure}
  \caption{$Q$ function for   Optimal Allocation MDP}
  \label{fig:rossmdp}
\end{figure}

\noindent {\bf Examples}. We chose the MDP parameters in~\eqref{eq:bidiagonale}, \eqref{eq:MDProssparam}
as $\statedim= 11$, $\actiondim=2$, $\dparam=1.2$, $\saram=10^{-6}$, 
$\terminal = [0,1,2,4, 8 , 15, 25, 40, 60, 90,200]$. Figure~\ref{fig:rossmdp} displays $\Q_k(x,2) - \Q_k(x,1)$ for  two cases: (i) 
  $\cost(x,1) = 0$, $\cost(x,2) = \saram (\fc - (x+2)^3)$, $\fv =10^3$  (ii) 
  $\cost(x,1) = 0$, $\cost(x,2) = \saram (\fc + 2.5\, x^4 I(x \leq3) - (x+2)^3)$,  $\fv = 10^3$. \\
  In case (ii), $Q(\state,\action)$ is not submodular; see Figure~\ref{fig:rossmdp}(b).
  But  Theorem~\ref{thm:ross} holds;  so optimal policy $\policy_k^*(\state) \uparrow x$.
  
{\bf Proof of Theorem~\ref{thm:ross}}
Using the modified dynamic programming recursion~\eqref{eq:modifiedval}, we   verify  the assumptions in Theorem~\ref{thm:IDmonotone}.
Examining the transformed cost $\bcost(\state,\action) $ in~\eqref{eq:modifiedval}, \ref{cost} holds for $\bcost(\state,\action)$,  $\state\in \{1,\ldots,\statedim\}$, if $\terminal{i}$ is convex and increasing, and $\cost(i,a)$ is decreasing in $i$, i.e,
  \ref{item:inc} holds.
  From the structure of $\tpe(\action)$ in~\eqref{eq:bidiagonale}, \ref{tps} holds.
The terminal cost  in ~\eqref{eq:modifiedval} is $0$ for all states;  so~\ref{terminal} holds trivially.
Next by~\ref{item:monotoneparam}, $\da{\action} > 0$. So for  actions $\action$ and $\action+1$,
it is easily verified that~\ref{tp_id} holds for $\al{\action} \geq \da{a}/\saram > 1$.
So $\saram \in (0,\min_\action \da{a}]$.

We  now establish~\ref{cost_id}  for $\bcost(\state,\action)$,
$\state \in \{1,\ldots,\statedim-1\}$. % namely
% We need sufficient conditions for~\ref{cost_id}, namely that for $\action \in %\{1,\ldots,\actiondim-1\}$,
% $$\bcost(\state+1,\action+1)-\bcost(\state+1,\action) \leq \be{\action}\, [ %\bcost(\state,\action+1)-\bcost(\state,\action)] $$
Choose
  $\be{\action} = \al{\action} =  \da{a}/\saram  = \dparam$
  (substituting~\eqref{eq:MDProssparam} for $\da{a}$).  By~\ref{item:monotoneparam}, $\al{\action} > 0 \uparrow a$. 
  Then~\ref{cost_id} is equivalent to
%\begin{multline*}
 $\saram  (\cost(\state+1,a+1) - \cost(\state+1,a)) + \saram\, \dparam\,(\terminal{\state}- \terminal{\state+1}) + \saram \,(\terminal{\state+1} - \terminal{\statedim})  \leq
    \be{\action}\, [ \saram  (\cost(\state,a+1) - \cost(\state,a)) + \saram\, \dparam\,(\terminal{\state-1}- \terminal{\state}) + \saram \,(\terminal{\state} - \terminal{\statedim})]$.
%  \end{multline*}
The positive  parameter $\saram$ cancels  on both sides, yielding~\eqref{eq:maincost}.
Since $\be{a} = \al{a}$,  \ref{sum} holds. Thus all conditions of Theorem~\ref{thm:IDmonotone} hold.

{\em Remark}. Choosing $\al =\bgam =  \max_a \dparam$ in the  proof, we obtain a stronger sufficient condition than~\eqref{eq:maincost}:
\begin{equation}
  \label{eq:stronger}
  \terminal{i+1} \geq \terminal{\statedim} \frac{\bgam-1}{\dparam-1}
  + \frac{\bgam\,\dparam\, (\terminal{i} - \terminal{i-1})}{\dparam-1}
  + \frac{ \diffcost(i+1,a) - \bgam \diffcost(i,a)}{\dparam-1} + \frac{(\dparam - \bgam)\terminal{i}}{\dparam-1}
\end{equation}
Since $\al = \be$ is a constant and  not $\action$ dependent, % we no longer require $\dparam$ to increase in~$\action$. So
~\ref{item:monotoneparam} is relaxed to $\dparam > 1$. 

\end{appendices}

\end{document}